\numberwithin{equation}{section}
\numberwithin{figure}{section}
\theoremstyle{plain}
\newtheorem{thm}{\protect\theoremname}
\theoremstyle{definition}
\newtheorem{defn}[thm]{\protect\definitionname}
\providecommand{\definitionname}{Definition}
\providecommand{\theoremname}{Theorem}
\begin{document}
	
	\title{Conformal Yano-Killing tensors for space-times \\ with cosmological
		constant}
	
	\author{Pawe{\l} Czajka\thanks{E--mail: \texttt{pc362945@okwf.fuw.edu.pl}}\; and Jacek Jezierski\thanks{E--mail: \texttt{Jacek.Jezierski@fuw.edu.pl}} \\
		Department of Mathematical Methods in
		Physics, \\ University of Warsaw,
		ul. Pasteura 5, 02-093 Warszawa, Poland}
	
	\maketitle
	%\global\long\def\sgn{\textrm{sgn}}
	\newcommand{\sgn}{\mathop {\rm sgn}\nolimits }
	%\newcommand{\ctg}{\mathop {\rm ctg}\nolimits }
	%\global\long\def\ctg{\textrm{ctg}}
	\global\long\def\dd{{\rm d}}
	\global\long\def\subbb#1{{_{#1}}}
	\global\long\def\suppp#1{{^{#1}}}
	\global\long\def\kropka{\, .}
	\global\long\def\przecinek{\, ,}
	\begin{abstract}
		%In this paper
		We present a new method for constructing conformal Yano-Killing
		tensors in five-di\-men\-sio\-nal Anti-de Sitter space-time. The
		found tensors are represented in two different coordinate systems.
		We also discuss, in terms of CYK tensors,
		global charges which are well defined for
		asymptotically (five-dimensional) Anti-de Sitter space-time.
		Additionally in Appendix we present our own derivation of conformal
		Killing one-forms in four-dimensional Anti-de Sitter space-time as an application of the
		Theorem~\ref{cof konforemnej jest konforemne}.
	\end{abstract}
	%\tableofcontents{}
	
	\section{Introduction}
	
	We generalize the construction of the Conformal Yano-Killing
	tensors presented in \cite{chas} to the case of the five-dimensional
	Anti-de Sitter space-time which is intensively explored
	in the context of the AdS/CFT correspondence.
In particular, one can try to generalize formulae (3.11-3.17) from Section 3.2 in \cite{Heller} to the case of conformal tensors. More precisely,
tensor product of two conformal Killing vectorfields $K^\mu$ can be replaced by a symmetric conformal Killing tensor $K^{\mu\nu}$:
 $ \displaystyle K^{\mu}K^{\nu} \langle {\cal O}_{\mu\nu}\rangle \; \longrightarrow \; K^{\mu\nu} \langle {\cal O}_{\mu\nu}\rangle $
but for skew-symmetric tensor $F_{\mu\nu}$ (primary operator) one can consider the expression
$Q_{\mu\nu} \langle F^{\mu\nu} \rangle$ where $Q$ is the conformal Yano-Killing two-form\footnote{Obviously higher rank tensors with more indices are also possible, both symmetric and skew-symmetric.}.
 We also generalize constructions from \cite{chas} to show how conformal Yano-Killing
	tensors can be used to define global gravitational charges in the
	case of the five-dimensional Anti-de Sitter space-time.

	\subsection{Construction of the five-dimensional Anti-de Sitter space-time}
	
	Anti-de Sitter space-time can be constructed in the following way.
	We consider six-dimensional affine space $\tilde{V}$, which is modeled
	on the vector space $V$. The vector space $V$ is equipped with a pseudo-scalar product
	\[
	(v,w)=-v^{0}w^{0}+\sum_{k=1}^{4}v^{k}w^{k}-v^{5}w^{5}\,.
	\]
	Affine space $\tilde{V}$ is naturally a flat manifold. If we choose
	one point in $\tilde{V}$, then it effectively turns our affine space $\tilde{V}$
	into vector space $V$ which is isomorphic to $\mathbb{R}^{6}$ but not canonically.
	Next we consider in the affine space $\tilde{V}$ identified with
	vector space $V$ the locus of the equation
	\begin{equation}
	\label{-lkw} \left(x,x\right)=-l^{2}\, .
	\end{equation}
	Bound $\left(x,x\right)=-l^{2}$ defines five-dimensional submanifold
	of $\tilde{V}$, which as a manifold is the five-dimensional Anti-de Sitter space-time.  In the case of affine space
	$\tilde{V}$ we have canonical isomorphism $\forall_{a\in\tilde{V}}\,T_{a}\tilde{V}\simeq V$.
	This means that $\tilde{V}$ is a (pseudo)-Riemannian manifold. Therefore
	we can pull back the metric to the locus $(x,x)=-l^{2}$ from the ambient
	space $\tilde{V}$. That way our five-dimensional Anti-de Sitter,
	later denoted $\textrm{AdS5}$, gains the structure of (pseudo)-Riemannian
	manifold.
	
	Each linear transformation of $V$ that respects quadratic form $(\cdot,\cdot)$,
	that is a transformation from $SO\left(2,4\right)$ group, preserves
	$\textrm{AdS5}$ as a subset, because if $f\in SO\left(2,4\right)$
	then it follows that $\left(x,x\right)=\left(fx,fx\right)$. The metric
	of $\textrm{AdS5}$ is also preserved, because transformation $f$
	preserves (pseudo)-scalar product of the space $V$, so it also preserves
	(pseudo)-Riemannian metric of the affine space $\tilde{V}$. That
	means that it also preserves metric induced on $\textrm{AdS5}$, which
	is submanifold of $\tilde{V}$. This shows that
	 $SO\left(2,4\right)$ is a subgroup of the isometry group of $\textrm{AdS5}$ (it is in fact the whole isometry group).
	
	\section{Description of the metric submanifold}
	\label{submanifoldy}
	
	Inspired by this example, let's consider general situation. We have
	a pair $\left(N,g\right)$, where $g$ is a metric of manifold $N$.
	We also have submanifold $M$ of codimension 1 with metric $\tilde{g}$
	induced from $N$. Let $\overset{N}{\nabla}_{X}Y$ be the Levi-Civita
	derivative of the vector field $Y$ tangent to $N$ with respect to
	the field $X$, which is also tangent to $N$. If fields $X,Y$ are tangent to $M$, that is $X,Y\in\Gamma\left(TM\right)$,
	 we will also write $\overset{N}{\nabla}_{X}Y$ understanding,
	that in this notation fields $X,Y$ are substituted by their arbitrary local
	extensions. The result on $M$ does not depend on the choice of those
	extensions. Next we denote as $\overset{M}{\nabla}_{X}Y$ the derivative
	of the field $Y$ tangent to $M$ with respect to the field $X$ also
	tangent to $M$ with respect to the metric connection on $M$. In this
	notation we have $X,Y\in\Gamma\left(TM\right)\Rightarrow\overset{N}{\nabla}_{X}Y=\overset{M}{\nabla}_{X}Y+\tilde{K}\left(X,Y\right)$,
	where $\tilde{K}\left(X,Y\right)$ is the form of external curvature.
	It is also called the second fundamental form. It is known that $\tilde{K}\left(X,Y\right)\perp TM$
	and $\tilde{K}\left(X,Y\right)=\tilde{K}\left(Y,X\right)$.
	
	If we choose local coordinate system $x_{1},\dots , x_{n+1}$ on some
	open subset of $N$ which satisfies $\emptyset\neq\left\{ p\in N\mid x\subbb{n+1}=0\right\} \subset M$,
	then the collection of functions $x_{1},\dots , x_{n}$ is a local coordinate
	system on $M$. Let Latin indices go from $1$ to $n$, whereas Greek
	letters go from $1$ to $n+1$. We also choose normal field $n$ defined
	on $M$ such that $n\in\Gamma\left(TN\right)$, $n\perp TM$, and
	$\left(n,n\right)=\pm1$. Here we choose the sign depending on the type
	of the surface (null surfaces are not considered here). We can now write $\tilde K (X,Y)=K(X,Y) n$ which defines  $K$ as a symmetric tensor of rank 2. We also use
	convenient notation, in which $v^{\mu}{_{|\nu}}=\overset{N}{\nabla}_{\nu}v^{\mu}$,
	and $v^{a}{_{;b}}=\overset{M}{\nabla}_{b}v^{a}$.
	\begin{thm}
		Let $\omega$ be a one-form on the manifold $N$. Then\label{pochodnajednoformyyy}
		\[
		\omega_{b|a}=\omega_{b;a}-K_{ab}\omega_{\mu}n^{\mu}\,.
		\]
	\end{thm}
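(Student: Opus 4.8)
The plan is to reduce the statement to the Gauss decomposition $\overset{N}{\nabla}_X Y = \overset{M}{\nabla}_X Y + K(X,Y)\,n$ established just above, which holds for vector fields $X,Y$ tangent to $M$, by pairing the one-form $\omega$ against a tangent vector field and exploiting the Leibniz rule obeyed by both connections. The key idea is that $\omega$ is never differentiated directly: its covariant derivative is recovered from the derivative of the scalar $\omega(Y)$ once the already-known rule for vector fields is inserted.

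Concretely, I would take $X=\partial_a$ and $Y=\partial_b$, both tangent to $M$ (so $a,b$ are Latin indices), and consider the scalar $f=\omega_\mu Y^\mu=\omega(Y)$. The Leibniz rule for $\overset{N}{\nabla}$ gives
\[
X(f)=(\overset{N}{\nabla}_a\omega)(\partial_b)+\omega(\overset{N}{\nabla}_a\partial_b)\,,
\]
while the analogous computation on $M$, applied to the pullback of $\omega$, gives $X(f)=(\overset{M}{\nabla}_a\omega)(\partial_b)+\omega(\overset{M}{\nabla}_a\partial_b)$. Subtracting the two expressions cancels the common term $X(f)$, and inserting the Gauss formula $\overset{N}{\nabla}_a\partial_b=\overset{M}{\nabla}_a\partial_b+K_{ab}\,n$ yields
\[
(\overset{N}{\nabla}_a\omega)(\partial_b)-(\overset{M}{\nabla}_a\omega)(\partial_b)=-\,K_{ab}\,\omega(n)=-\,K_{ab}\,\omega_\mu n^\mu\,,
\]
which in components is exactly $\omega_{b|a}=\omega_{b;a}-K_{ab}\,\omega_\mu n^\mu$. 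Equivalently, one can argue directly in coordinates: write $\omega_{b|a}=\partial_a\omega_b-\overset{N}{\Gamma}{}^\mu_{ab}\omega_\mu$, evaluate $\overset{N}{\Gamma}{}^\mu_{ab}\omega_\mu=\omega(\overset{N}{\nabla}_a\partial_b)$ through Gauss as $\overset{M}{\Gamma}{}^c_{ab}\omega_c+K_{ab}\,\omega_\mu n^\mu$, and recognize the remaining terms $\partial_a\omega_b-\overset{M}{\Gamma}{}^c_{ab}\omega_c$ as $\omega_{b;a}$.

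The main point to watch — more a matter of careful bookkeeping than a genuine obstacle — is the identification of components and index ranges. One must note that $\partial_a\omega_b$ is a derivative of a component function along a direction tangent to $M$, hence computed identically in $N$ and in $M$, so it drops out of the difference; that the tangential components $\omega_b$ of $\omega$ coincide with those of its pullback, so the $\omega_{b;a}$ appearing on the right really is the intrinsic $M$-derivative; and that the contraction $\omega_\mu n^\mu$ runs over all $n+1$ indices of $N$, capturing precisely the normal value $\omega(n)$ that the Gauss formula feeds in. Since $\tilde K(X,Y)=K(X,Y)\,n$ is symmetric and normal to $M$, no further terms survive and the stated formula follows.
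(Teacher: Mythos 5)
Your proposal is correct and takes essentially the same route as the paper: the paper likewise contracts $\omega$ with a tangent vector field $v$, computes $(v^{b}\omega_{b})_{|a}$ with both connections via the Leibniz rule, inserts the Gauss formula in the component form $v^{b}{}_{|a}=v^{b}{}_{;a}+K_{ac}v^{c}n^{b}$, and cancels the common terms to isolate $\omega_{b|a}-\omega_{b;a}=-K_{ab}\,\omega_{\mu}n^{\mu}$. Your specialization to $Y=\partial_{b}$ and the alternative Christoffel-symbol bookkeeping are only cosmetic variations on that argument.
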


	The proof of this theorem  can be found in appendix B.
	
	\begin{thm}
		\label{krzywizna zewnetrzna}The external curvature form $K_{ab}$
		satisfies equation $K=-\frac{1}{2}\mathcal{L}_{n}g$.
	\end{thm}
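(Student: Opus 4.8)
The plan is to rewrite the Lie derivative in terms of the Levi--Civita connection of $N$ and then feed the resulting covariant derivative of the normal into Theorem~\ref{pochodnajednoformyyy}. Since $\overset{N}{\nabla}$ is metric and torsion-free, for any vector field $n$ one has the Killing-type identity $(\mathcal{L}_n g)_{\mu\nu} = n_{\mu|\nu} + n_{\nu|\mu}$, where $n_\mu = g_{\mu\alpha}n^\alpha$ is the metric dual of $n$. Restricting the free indices to directions tangent to $M$, the entire claim reduces to identifying the tangential components $n_{b|a}$ of the covariant derivative of this normal one-form.

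First I would apply Theorem~\ref{pochodnajednoformyyy} to the one-form $\omega_\mu := n_\mu$, which gives $n_{b|a} = n_{b;a} - K_{ab}\, n_\mu n^\mu$. The key simplification is that $n$ is normal to $M$: for every vector $X$ tangent to $M$ we have $g(n,X)=0$, so the pullback of the one-form $n_\mu$ to $M$ vanishes identically. Hence its intrinsic covariant derivative is zero, $n_{b;a}=0$, and we are left with $n_{b|a} = -K_{ab}\,(n,n)$, where $(n,n)=n_\mu n^\mu = \pm 1$.

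Substituting into the Killing-type identity and using the symmetry $K_{ab}=K_{ba}$ noted above yields $(\mathcal{L}_n g)_{ab} = n_{a|b} + n_{b|a} = -2(n,n)\,K_{ab}$, that is $K_{ab} = -\tfrac{1}{2(n,n)}(\mathcal{L}_n g)_{ab}$, which is the asserted formula once the normal is unit. I expect two points to require the most care. The first is the bookkeeping of the scalar factor $(n,n)=\pm1$: the clean statement $K=-\tfrac12\mathcal{L}_n g$ corresponds to the normalization in which this factor is $+1$, and the sign must be tracked consistently through the defining relation $\tilde{K}(X,Y)=K(X,Y)\,n$. The second, more structural, point is that $\mathcal{L}_n g$ a priori requires $n$ to be defined in a neighborhood of $M$, whereas $n$ is given only on $M$; one should check that the tangential--tangential components $(\mathcal{L}_n g)_{ab}$ depend only on $n|_M$ and its derivatives along $M$, so that the expression is independent of the chosen extension. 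This is exactly what the computation confirms, since both $n_{b;a}$ and $K_{ab}$ are intrinsic to $M$.
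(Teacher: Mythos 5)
Your proof is correct, but it follows a genuinely different route from the paper's. The paper proves Theorem~\ref{krzywizna zewnetrzna} by a direct, index-free computation: it expands $\left(\mathcal{L}_{n}g\right)\left(X,Y\right)$ for $X,Y$ tangent to $M$ via the Leibniz rule, replaces $\mathcal{L}_{n}X=\left[n,X\right]$ by $\overset{N}{\nabla}_{n}X-\overset{N}{\nabla}_{X}n$ using torsion-freeness, and then evaluates $g(\overset{N}{\nabla}_{X}n,Y)=-g(n,\overset{N}{\nabla}_{X}Y)=-K\left(X,Y\right)\left(n,n\right)$ from metric compatibility together with the Gauss formula $\overset{N}{\nabla}_{X}Y=\overset{M}{\nabla}_{X}Y+K\left(X,Y\right)n$. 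You instead obtain the theorem as a corollary of Theorem~\ref{pochodnajednoformyyy} applied to the normal one-form $n_{\mu}$, combined with the standard Killing-operator identity $\left(\mathcal{L}_{n}g\right)_{\mu\nu}=n_{\mu|\nu}+n_{\nu|\mu}$; the single observation that the pullback of $n_{\mu}$ to $M$ vanishes (hence $n_{b;a}=0$) does all the work, leaving $n_{b|a}=-K_{ab}\left(n,n\right)$. Both arguments ultimately rest on the same ingredients (metricity plus the Gauss formula, the latter hidden inside the proof of Theorem~\ref{pochodnajednoformyyy}), but your version is shorter because it reuses an already established lemma, and it is more careful on two points the paper glosses over. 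First, you track the factor $\left(n,n\right)=\pm1$ explicitly, getting $\left(\mathcal{L}_{n}g\right)_{ab}=-2\left(n,n\right)K_{ab}$; the paper's last step silently sets $\left(n,n\right)=+1$, even though its own computation produces the same $\left(n,n\right)$ factor, and for the AdS5 pseudosphere the normal direction $x$ satisfies $\left(x,x\right)=-l^{2}<0$, so there the factor is $-1$ --- harmless for the paper, since Theorem~\ref{thm:ads jest dobrze zanurzony} only needs $K\sim\tilde{g}$, but your bookkeeping is the honest statement. Second, you note that $\mathcal{L}_{n}g$ a priori requires extending $n$ off $M$ and verify that the tangential components are extension-independent; the paper's bracket computation has the same implicit need (the term $\left[n,X\right]$ differentiates $X$ along $n$) and does not comment on it. What the paper's route buys in exchange is self-containedness: it invokes no prior lemma and avoids index manipulations entirely.
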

	The proof of this theorem  can be found in  appendix B.

	\begin{thm}
		\label{thm:ads jest dobrze zanurzony}In the case $N=\tilde{V}$ and
		$M=\textrm{AdS5}$ we have $K=C\tilde{g}$, where $C$ is some real
		function on $M$.
	\end{thm}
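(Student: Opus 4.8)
The plan is to show that $\textrm{AdS5}$ is a totally umbilic hypersurface of the flat ambient space $\tilde V$ by computing its Weingarten map explicitly. The essential observation is that, because $\textrm{AdS5}$ is the level set $(x,x)=-l^{2}$ of the quadratic function $f(x)=(x,x)$, its normal direction is given by the gradient of $f$. Fixing the origin of the affine space $\tilde V$ at the centre of the quadric, the position field $x^{\mu}$ is globally defined, and since $\partial_{\mu}f=2g_{\mu\nu}x^{\nu}$ the gradient of $f$ equals $2x^{\mu}$. Hence the position vector itself is everywhere normal to $\textrm{AdS5}$. On $M$ one has $(x,x)=-l^{2}$, so the field $n:=\tfrac1l x$ is a unit normal with $(n,n)=-1$; note that this forces the normal to be timelike, consistent with $\tilde g$ being Lorentzian.

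First I would exploit the flatness of $\tilde V$. In the affine (Cartesian) coordinates adapted to the pseudo-scalar product the components $g_{\mu\nu}$ are constant and the Christoffel symbols of $\overset{N}{\nabla}$ vanish, so that $\overset{N}{\nabla}_{\nu}x^{\mu}=\delta^{\mu}_{\nu}$. Consequently, for every vector field $X$ tangent to $M$,
\[
\overset{N}{\nabla}_{X}n=\tfrac1l\,X,
\]
which is already purely tangential; this is precisely the statement that the shape operator is $\tfrac1l\,\mathrm{Id}$.

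Next I would extract $K$ itself. Differentiating the identity $(n,Y)=0$ for $Y\in\Gamma(TM)$ and inserting the Gauss decomposition $\overset{N}{\nabla}_{X}Y=\overset{M}{\nabla}_{X}Y+K(X,Y)n$ together with $(n,n)=-1$ gives $K(X,Y)=(\overset{N}{\nabla}_{X}n,Y)$. Combining this with the displayed formula above yields $K(X,Y)=\tfrac1l\,(X,Y)=\tfrac1l\,\tilde g(X,Y)$, i.e.\ $K=\tfrac1l\,\tilde g$, so the function $C$ is in fact the constant $\tfrac1l$ (stronger than the stated claim). As an independent check one can instead invoke Theorem~\ref{krzywizna zewnetrzna}: since $g$ is constant and $\partial_{\mu}n^{\lambda}=\tfrac1l\delta^{\lambda}_{\mu}$, a one-line computation gives $(\mathcal{L}_{n}g)_{\mu\nu}=\tfrac2l\,g_{\mu\nu}$, whence $K=-\tfrac12\mathcal{L}_{n}g$ is again proportional to $\tilde g$.

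The computation is short; the only genuine care is needed in two places. The first is the sign bookkeeping coming from the timelike normalization $(n,n)=-1$, which enters the passage from the Gauss formula to $K$ and fixes the overall sign and value of $C$. The second, if one prefers the route through Theorem~\ref{krzywizna zewnetrzna}, is to make sure the extension $n=\tfrac1l x$ used off $M$ is compatible with the conventions under which that theorem was proved, since $\mathcal{L}_{n}g$ a priori depends on the chosen extension of $n$ away from the hypersurface. Either way the conclusion $K=C\tilde g$ with $C$ constant follows at once.
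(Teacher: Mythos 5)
Your proposal is correct and in fact proves more than the statement asks (it identifies $C=\tfrac1l$), but it takes a genuinely different route from the paper. The paper's proof invokes Theorem~\ref{krzywizna zewnetrzna} and then chooses pseudo-polar coordinates $(r,\phi_i)$ centred at the quadric's centre, in which $g=\pm\,\dd r^2\pm r^2\hat g$ and $n=\partial_r$; the Lie derivative $\mathcal{L}_{\partial_r}g=\pm 2r\hat g$ is then manifestly proportional to the induced metric, which gives $K=C\tilde g$ without computing $C$. You instead work coordinate-free: the position field is the gradient of $f(x)=(x,x)$, hence normal; $n=\tfrac1l x$ satisfies $(n,n)=-1$; and flatness ($\overset{N}{\nabla}_\nu x^\mu=\delta^\mu_\nu$) gives the shape operator $\tfrac1l\,\mathrm{Id}$ directly, whence $K=\tfrac1l\tilde g$ by the Gauss formula. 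Your version buys the explicit constant and avoids any coordinate choice; the paper's version is shorter once Theorem~\ref{krzywizna zewnetrzna} is in hand and treats all pseudo-spheres uniformly. One caveat about your ``independent check'': the formula $K=-\tfrac12\mathcal{L}_n g$ as stated in Theorem~\ref{krzywizna zewnetrzna} hides a factor of $(n,n)$ --- the computation in Appendix~B actually yields $(\mathcal{L}_n g)(X,Y)=-2\,(n,n)\,K(X,Y)$, and the stated sign is correct only for a spacelike normal. Since the normal to $\textrm{AdS5}$ is timelike, the correct relation here is $K=+\tfrac12\mathcal{L}_n g$, which resolves the apparent sign clash between your check (which gave $-\tfrac1l\tilde g$) and your direct computation (which gave $+\tfrac1l\tilde g$). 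As the theorem asserts only proportionality, this does not affect the conclusion, and you were right to flag the convention-dependence of that step.
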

	
	\begin{proof}
		We have the identity $K=-\frac{1}{2}\mathcal{L}_{n}g$. On the vector
		space with pseudo-scalar product we can always choose coordinates
		$\left(r,\phi_{i}\right)$, where $r(p)=\sqrt{\mid g(p-0,p-0) \mid},\quad p\in N$ ($0$ here is an arbitrbitrally chosen point in $N$) is the distance from the zero
		vector, whereas $\phi_{i}$ are some angles that are coordinate system
		of the pseudo-sphere of constant $r$. Additionally we can choose
		 coordinates $\phi_i$ in such a way that the metric has the form
		\begin{equation}
		g=\pm dr^{2}\pm r^{2}\hat{g}\,,
		\end{equation}
		where $\hat{g}$ is a metric of the unit pseudo-sphere parametrized by $\phi_i$ and does not
		depend on the coordinate $r$. Lie derivative along the field $n=\partial_{r}$
		of the metric $g$ is of course proportional to $\hat{g}$ which is
		proportional to the induced metric on pseudo-sphere.
	\end{proof}
	
	%\part{Methodology}
	So far the results are repeated to fix the notation -- the subject of Theorems 1-3 is well-established. %the Theorems 1-3 is a well established subject.
	\section{Pulling back conformal tensors to submanifolds}
	
	Theorem \ref{thm:ads jest dobrze zanurzony} suggests to restrict
	our considerations to the case when $K\sim\tilde{g}.$ From now on
	we assume that this condition holds. Now we can prove the following
	theorem.
	
	\begin{thm}
		\label{cof konforemnej jest konforemne} If $k$ is a conformal Killing
		one-form on $N$ then its pullback to $M$ is a conformal Killing one-form
		on $M$.
	\end{thm}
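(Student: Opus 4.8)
The plan is to start from the conformal Killing equation that $k$ obeys on the ambient space $N$, restrict both free indices to directions tangent to $M$, and then use Theorem~\ref{pochodnajednoformyyy} to replace the ambient covariant derivatives (denoted $|$) by the intrinsic ones (denoted $;$). The standing assumption $K\sim\tilde g$ is exactly what guarantees that the correction term produced in this replacement is again proportional to the induced metric, so that the pulled-back form satisfies a conformal Killing equation on $M$.

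Concretely, writing $\dim N=n+1$, the conformal Killing equation on $N$ reads
\[
k_{\mu|\nu}+k_{\nu|\mu}=\frac{2}{n+1}\,g_{\mu\nu}\,k^{\alpha}{_{|\alpha}}\kropka
\]
I would set $\mu=a$, $\nu=b$ with tangent indices. Since $g_{ab}=\tilde g_{ab}$ along $M$, the right-hand side becomes $\frac{2}{n+1}\tilde g_{ab}\,k^{\alpha}{_{|\alpha}}$. Applying Theorem~\ref{pochodnajednoformyyy} to each term on the left and using the symmetry $K_{ab}=K_{ba}$, the two correction terms combine into
\[
k_{a|b}+k_{b|a}=k_{a;b}+k_{b;a}-2K_{ab}\,k_{\mu}n^{\mu}\kropka
\]
Substituting this and moving the curvature contribution to the other side yields
\[
k_{a;b}+k_{b;a}=\frac{2}{n+1}\,\tilde g_{ab}\,k^{\alpha}{_{|\alpha}}+2K_{ab}\,k_{\mu}n^{\mu}\kropka
\]

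At this point I would invoke the hypothesis $K_{ab}=C\tilde g_{ab}$. The curvature term then becomes $2C(k_{\mu}n^{\mu})\tilde g_{ab}$, and the whole right-hand side collapses to $\tilde g_{ab}$ times a single scalar function on $M$:
\[
k_{a;b}+k_{b;a}=\tilde g_{ab}\left[\frac{2}{n+1}\,k^{\alpha}{_{|\alpha}}+2C\,k_{\mu}n^{\mu}\right]\kropka
\]
Writing $\tilde k_{a}:=k_{a}$ for the pullback (the tangential components, whose intrinsic derivative $k_{a;b}$ depends only on $\tilde k$ and the connection of $M$), this states precisely that the symmetrized intrinsic derivative of $\tilde k$ is pure trace with respect to $\tilde g$, i.e.\ that $\tilde k$ is a conformal Killing one-form on $M$. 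Contracting with $\tilde g^{ab}$ shows that the bracket automatically equals $\tfrac1n\,\tilde g^{cd}\tilde k_{c;d}$, so the conformal factor is fixed and no further consistency condition arises.

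The computation is essentially routine bookkeeping once Theorem~\ref{pochodnajednoformyyy} is available; the only genuine content is the observation that the extrinsic-curvature correction, which for a general embedding would destroy the pure-trace structure on the right-hand side, is rendered proportional to $\tilde g_{ab}$ precisely by the hypothesis $K\sim\tilde g$. The point requiring the most care is the index ordering in Theorem~\ref{pochodnajednoformyyy} together with the symmetry of $K$, so that the two separate correction terms merge into the single symmetric expression $2K_{ab}k_{\mu}n^{\mu}$ rather than something non-symmetric in $a,b$.
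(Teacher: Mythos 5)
Your proof is correct and takes essentially the same route as the paper: symmetrizing Theorem~\ref{pochodnajednoformyyy} gives $k_{(a;b)}=k_{(a|b)}+K_{ab}n^{\mu}k_{\mu}$, and the standing assumption $K\sim\tilde g$ makes both terms on the right proportional to $\tilde g_{ab}$. The only difference is cosmetic --- you write out the trace factor $\frac{2}{n+1}k^{\alpha}{}_{|\alpha}$ explicitly where the paper simply names the proportionality function $A$.
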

	\begin{proof}
		Let's compute
		\[
		k_{\left(a|b\right)} + K_{ab}n^{\mu}k_{\mu} = k_{\left(a;b\right)} \; (=Ag_{ab})\,.
		\]
		where $A$ is a function. We see that $k_{\left(a;b\right)}\sim\tilde{g}_{ab}$
		because both terms above are proportional to the metric tensor $g_{ab}$ on $M$.
		Let us notice that in this case the restriction of a Killing one-form,
		that is one-form such that $k_{\left(a|b\right)}=0$, in some cases
		won't be a Killing one-form on $M$ but only conformal one.
	\end{proof}
	
	\begin{thm}
		\label{thm:pochodna dwuformy}We have the following identity for computing
		the covariant derivative of the two-form $Q$ on the manifold $N$:
		\[ Q_{ac|b}=Q_{ac;b}-K_{ab}Q_{\nu c}n^{\nu}-K_{bc}Q_{a\mu}n^{\mu} \, . \]
	\end{thm}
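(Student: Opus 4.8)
The plan is to treat this as the two-form analogue of Theorem~\ref{pochodnajednoformyyy}, exploiting the fact that covariant differentiation is a derivation that acts on each index separately. The conceptual content is identical to the one-form case: for a direction $b$ tangent to $M$, the ambient derivative $\overset{N}{\nabla}_b$ and the intrinsic derivative $\overset{M}{\nabla}_b$ differ only through the normal part of the ambient connection, and that normal part is precisely the second fundamental form. Since $Q$ carries two tangential indices, one expects to pick up one correction term of the type appearing in Theorem~\ref{pochodnajednoformyyy} for each of them.

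Concretely, I would write both derivatives in coordinates adapted to $M$ (with $x_{n+1}=0$ cutting out $M$),
\[
Q_{ac|b}=\partial_b Q_{ac}-\overset{N}{\Gamma}{}^{\mu}_{ba}Q_{\mu c}-\overset{N}{\Gamma}{}^{\mu}_{bc}Q_{a\mu}\,,\qquad
Q_{ac;b}=\partial_b Q_{ac}-\overset{M}{\Gamma}{}^{d}_{ba}Q_{dc}-\overset{M}{\Gamma}{}^{d}_{bc}Q_{ad}\,,
\]
where in the first expression $\mu$ runs over all $n+1$ values and in the second $d$ runs only over the tangential ones. Subtracting, the partial derivatives cancel and the tangential parts of the Christoffel symbols cancel by the Gauss formula $\overset{N}{\nabla}_X Y=\overset{M}{\nabla}_X Y+K(X,Y)n$, which says exactly that the tangential projection of $\overset{N}{\Gamma}{}^{c}_{ba}$ agrees with $\overset{M}{\Gamma}{}^{c}_{ba}$. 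What survives is only the contribution with an upper normal index, and the same Gauss formula identifies the component of $\overset{N}{\Gamma}{}^{\cdot}_{ba}$ along $n$ with $K_{ba}$ — this is the very identification already used in the proof of Theorem~\ref{pochodnajednoformyyy}.

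Collecting the two surviving terms yields
\[
Q_{ac|b}-Q_{ac;b}=-K_{ba}\,Q_{\nu c}n^{\nu}-K_{bc}\,Q_{a\mu}n^{\mu}\,,
\]
after rewriting the normal component of the ambient two-form as the contraction $Q_{\nu c}n^{\nu}$ (and likewise for the second index). Using the symmetry $K_{ab}=K_{ba}$ this is exactly the claimed identity. A cleaner, coordinate-free variant avoids Christoffel symbols altogether: since the difference of two connections is a tensor, one evaluates $(\overset{N}{\nabla}_b-\overset{M}{\nabla}_b)Q$ on tangential fields $X,Y$, substitutes $\overset{N}{\nabla}_b X=\overset{M}{\nabla}_b X+K(b,X)n$ into the derivation property of $\overset{N}{\nabla}_b$, and reads off the two normal contractions $Q(n,\cdot)$ directly.

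The computation is routine; the only genuine care is bookkeeping. I expect the main obstacle to be keeping the index ranges straight — remembering that the ambient sums include the extra normal value while the intrinsic ones do not — and fixing the sign and normalization of the normal contribution, which depends on the choice $(n,n)=\pm1$, precisely as in the one-form case of Theorem~\ref{pochodnajednoformyyy}.
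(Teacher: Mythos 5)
Your proof is correct, but it takes a genuinely different route from the paper's. The paper does not touch Christoffel symbols at all: it reduces the two-form identity to the already-proven one-form case by contracting $Q$ with an arbitrary tangent field $v$, applying Theorem~\ref{pochodnajednoformyyy} to the one-form $Q_{a\mu}v^{\mu}$, expanding the same quantity by the Leibniz rule with $v^{\mu}{}_{|b}=v^{c}{}_{;b}\delta^{\mu}{}_{c}+K_{bc}v^{c}n^{\mu}$, and comparing. That recycling of the one-form result is what makes the later generalization to three-forms (Theorem~\ref{thm:pochodna kowariantna trojformy}) genuinely ``analogous'' with no new computation, and it buys a proof with no coordinate bookkeeping. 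Your primary argument is more self-contained but needs one repair: in general adapted coordinates ($x_{n+1}=0$ cutting out $M$, with $\partial_{n+1}$ merely transverse, not orthogonal, to $M$), the tangential Christoffel components do \emph{not} cancel. The Gauss formula gives componentwise
\begin{equation*}
\overset{N}{\Gamma}{}^{d}{}_{ba}=\overset{M}{\Gamma}{}^{d}{}_{ba}+K_{ba}\,n^{d}\,,\qquad
\overset{N}{\Gamma}{}^{n+1}{}_{ba}=K_{ba}\,n^{n+1}\,,
\end{equation*}
because the unit normal generally has nonzero components $n^{d}$ along the coordinate directions tangent to $M$. Your conclusion survives precisely because the leftover tangential piece $K_{ba}n^{d}Q_{dc}$ recombines with the $\mu=n+1$ term into the full contraction $K_{ba}n^{\mu}Q_{\mu c}$, which is what you write down; but your intermediate claim that the tangential projection of $\overset{N}{\Gamma}{}^{c}{}_{ba}$ ``agrees'' with $\overset{M}{\Gamma}{}^{c}{}_{ba}$ is literally true only as an orthogonal projection statement, or in coordinates with $\partial_{n+1}\perp TM$ — not component-by-component. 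Your coordinate-free variant (evaluating $(\overset{N}{\nabla}_{b}-\overset{M}{\nabla}_{b})Q$ on tangent fields via the derivation property and the Gauss formula) is immune to this issue and is in fact closest in spirit to the paper's contraction argument. Finally, the sign worry you flag at the end is unfounded: the correction terms involve $K$ and $n$ only through the product $\tilde{K}(X,Y)=K(X,Y)\,n$, which is insensitive both to the sign $(n,n)=\pm1$ and to the choice of orientation of $n$.
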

	\begin{proof}
		We contract the two-form $Q$ with arbitrary vector field $v$ tangent
		to $M$. We can compute the derivative of the resulting one-form using
		the formula \ref{eq:poch kow kowektor}.
		\begin{equation}
		(Q_{a\mu}v^{\mu})_{|b}=(Q_{ac}v^{c})_{;b}-K_{ab}Q_{\nu\mu}v^{\mu}n^{\nu}
		=Q_{ac;b}v^{c}+Q_{ac}v^{c}{_{;b}}-K_{ab}Q_{\nu\mu}v^{\mu}n^{\nu}\, .
		\end{equation}
		On the other hand
		\begin{equation}
		(Q_{a\mu}v^{\mu})_{|b}=Q_{ac|b}v^{c}+Q_{ac}v_{;b}^{c}+Q_{a\mu}K_{bc}v^{c}n^{\mu}\,.
		\end{equation}
		Comparison of the two sides of equations leads to the conclusion that
		\begin{equation}
		Q_{ac|b}v^{c}=Q_{ac;b}v^{c}-K_{ab}Q_{\nu c}n^{\nu}v^{c}-Q_{a\mu}K_{bc}v^{c}n^{\mu}\,,\label{eq:poch kow dwuforma}
		\end{equation}
		so
		\begin{equation}
		Q_{ac|b}=Q_{ac;b}-K_{ab}Q_{\nu c}n^{\nu}-K_{bc}Q_{a\mu}n^{\mu}\,.
		\end{equation}
		It is easy to generalize this identity to arbitrary $n$-forms. For a three-form
		the identity is given by theorem \ref{thm:pochodna kowariantna trojformy}. The identity for
		a two-form can be written as
		\begin{equation}
		Q_{ab|c}=Q_{ab;c}-q_{a}K_{cb}+q_{b}K_{ac}\,,
		\end{equation}
		where $q_{a}=Q_{a\mu}n^{\mu}$.
	\end{proof}
	\begin{defn}
		The two-form $Q$ satisfying equation $Q_{\alpha(\beta;\gamma)}=0$ is
		called Yano-Killing tensor.
	\end{defn}
	
	\begin{thm}
		If $\tilde{Q}$ is a Yano-Killing tensor on the manifold $N$, then
		its pullback to sub-manifold $M$ denoted by $Q$ satisfies
		\[ Q_{ab;c}+Q_{ac;b}=2\tilde{q}_{a}g_{cb}-\tilde{q}_{b}g_{ac}-\tilde{q}_{c}g_{ab} \, , \]
		where $\tilde{q}$ is a certain one-form\footnote{$\tilde{q}$ is obviously related to the divergence of $Q$
			by contraction of the indices in the above equation.}.
	\end{thm}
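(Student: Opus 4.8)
The plan is to read off the Yano-Killing condition for $\tilde Q$ on $N$, restrict every free index to directions tangent to $M$, and then trade the ambient (Levi-Civita on $N$) derivatives for intrinsic ones by means of the identity established in Theorem~\ref{thm:pochodna dwuformy}; the standing umbilic hypothesis $K\sim\tilde g$ will then force the right-hand side into the required conformal shape.

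First I would write out the defining equation. A Yano-Killing tensor $\tilde Q$ on $N$ satisfies $\tilde Q_{\alpha\beta|\gamma}+\tilde Q_{\alpha\gamma|\beta}=0$ for all ambient indices. Specializing $\alpha,\beta,\gamma$ to indices $a,b,c$ tangent to $M$, and using that on such indices the components of $\tilde Q$ coincide with those of the pullback $Q$, this reduces to $Q_{ab|c}+Q_{ac|b}=0$, where $|$ is still the $N$-derivative. Next I would substitute Theorem~\ref{thm:pochodna dwuformy} in the form $Q_{ab|c}=Q_{ab;c}-q_aK_{cb}+q_bK_{ac}$ with $q_a=Q_{a\mu}n^{\mu}$, together with the same identity after the swap $b\leftrightarrow c$. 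Adding the two relations and invoking the symmetry $K_{bc}=K_{cb}$ yields
\[
Q_{ab;c}+Q_{ac;b}=2q_aK_{bc}-q_bK_{ac}-q_cK_{ab}\,.
\]
Finally I would apply the assumption $K=C\tilde g$, i.e. $K_{ab}=Cg_{ab}$ (which holds for $\textrm{AdS5}$ by Theorem~\ref{thm:ads jest dobrze zanurzony}), so that the right-hand side collapses to $2\tilde q_a g_{bc}-\tilde q_b g_{ac}-\tilde q_c g_{ab}$ with the one-form defined by $\tilde q_a:=C q_a=C\,Q_{a\mu}n^{\mu}$.

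The computation is essentially bookkeeping, so I do not expect a genuine obstacle; the two points that need care are the correct handling of the index swap together with the symmetry of $K$, and---more importantly---the recognition that it is precisely the umbilic condition $K\sim\tilde g$ that turns the generic combination $2q_aK_{bc}-q_bK_{ac}-q_cK_{ab}$ into the pure metric expression characteristic of a conformal Yano-Killing tensor. Without $K\sim\tilde g$ the pullback would obey an equation whose right-hand side is quadratic in a general extrinsic curvature rather than this clean conformal form, so the hypothesis fixed at the beginning of this section is exactly what the statement rests on. The footnote identification of $\tilde q$ with (a multiple of) the divergence of $Q$ then follows by contracting the resulting equation with $g^{ac}$: the antisymmetric term drops out and one is left with a proportionality between $\tilde q$ and $Q_{ab}{}^{;a}$.
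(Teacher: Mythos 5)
Your proposal is correct and takes essentially the same route as the paper's own proof: both restrict the Yano--Killing equation $Q_{ab|c}+Q_{ac|b}=0$ to tangent indices, substitute the identity of Theorem~\ref{thm:pochodna dwuformy} in the form $Q_{ab|c}=Q_{ab;c}-q_{a}K_{cb}+q_{b}K_{ac}$, and use the section's standing umbilic assumption $K\sim\tilde{g}$ to obtain the stated equation with $\tilde{q}_{a}=C\,Q_{a\mu}n^{\mu}$. The only difference is cosmetic: you make explicit the final substitution $K_{ab}=Cg_{ab}$ (and the contraction identifying $\tilde{q}$ with the divergence of $Q$), which the paper leaves implicit.
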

	\begin{proof}
		Let's check what equation is satisfied by the pullback of the
		form $Q$ to $M$. We have
		\begin{equation}
		0=Q_{ab|c}+Q_{ac|b}=Q_{ab;c}+Q_{ac;b}-2q_{a}K_{cb}+q_{b}K_{ac}+q_{c}K_{ab}\,.
		\end{equation}
		So it turns out that pullback of the Yano-Killing tensor is satisfying
		a bit different equation then Yano-Killing tensors. This equation
		looks like this
		\begin{equation}
		Q_{ab;c}+Q_{ac;b}=2\tilde{q}_{a}g_{cb}-\tilde{q}_{b}g_{ac}-\tilde{q}_{c}g_{ab}\,,\label{cyk rownanie}
		\end{equation}
		where $\tilde{q}_{c}$ is a certain one-form.
	\end{proof}
	Last theorem suggests the following definition:
	\begin{defn}
		\label{cykdef}
		If $M$ is a Riemannian manifold and $Q$ is the two-form satisfying equation
		\[  Q_{ab;c}+Q_{ac;b}=2\tilde{q}_{a}g_{cb}-\tilde{q}_{b}g_{ac}-\tilde{q}_{c}g_{ab} \]
		for some one-form $\tilde{q}$, then $Q$ is called conformal Yano-Killing
		tensor. We will often use abbreviation CYK tensor for conformal Yano-Killing
		tensor.
		
	\end{defn}
	
	We decided to find CYK tensors on five-dimensional Anti-de Sitter space-time.
	In the ambient vector space $V$ that surrounds five-dimensional Anti-de Sitter space-time
	it is easy to find some CYK tensors. We can just choose two-forms which
	have constant coefficients in Cartesian coordinates. This way we can
	obtain $15$ CYK tensors on Anti-de Sitter space-time. However it is known that
	there are $35$ linearly independent CYK tensors on this space-time.
	We will now present a way to find the remaining $20$ CYK tensors.
	\begin{thm}
		\label{thm:pochodna kowariantna trojformy}Analogously to the equation
		\ref{eq:poch kow dwuforma} it can be proved that the covariant derivative
		of three-form $T_{\alpha\beta\gamma}$ looks like this
		\begin{equation}
		T_{abc|d}=T_{abc;d}-Q_{bc}K_{ad}+Q_{ac}K_{bd}-Q_{ab}K_{cd}\,,
		\end{equation}
		where $Q_{ab}=T_{ab\mu}n^{\mu}$.
	\end{thm}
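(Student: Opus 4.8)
The plan is to mirror exactly the argument used for the two-form in Theorem~\ref{thm:pochodna dwuformy}: reduce the three-form identity to the already-established two-form identity by contracting $T$ with an arbitrary tangent vector field $v$ and comparing the two ways of differentiating the resulting two-form. Concretely, let $v$ be tangent to $M$ (so $v^\mu n_\mu=0$ and $v^\mu=v^c$) and set $S_{ab}:=T_{ab\mu}v^\mu$, a two-form on $N$ whose restriction to $M$ is $S_{ab}=T_{abc}v^c$. I would compute $S_{ab|d}$ along a tangential direction $d$ in two different ways and equate.

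On one hand, the Leibniz rule on $N$ gives
\[
(T_{ab\mu}v^\mu)_{|d} = T_{abc|d}v^c + T_{ab\mu}\,v^\mu{}_{|d}\,,
\]
and inserting the Gauss decomposition $v^\mu{}_{|d}=v^c{}_{;d}+K_{dc}v^c n^\mu$ (from Section~\ref{submanifoldy}) together with $Q_{ab}=T_{ab\mu}n^\mu$ turns the last term into $T_{abc}v^c{}_{;d}+Q_{ab}K_{dc}v^c$. On the other hand, applying the two-form identity \eqref{eq:poch kow dwuforma} to $S$ (with index substitution $(a,c,b)\to(a,b,d)$) and expanding $S_{ab;d}$ by the intrinsic Leibniz rule gives
\[
S_{ab|d} = T_{abc;d}v^c + T_{abc}v^c{}_{;d} - K_{ad}\,S_{\nu b}n^\nu - K_{db}\,S_{a\mu}n^\mu\,.
\]

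The one genuinely careful step — and the main place to make a sign error — is rewriting the normal contractions $S_{\nu b}n^\nu$ and $S_{a\mu}n^\mu$ in terms of the canonical quantity $Q_{ab}=T_{ab\mu}n^\mu$, which contracts the \emph{last} index. Using the total antisymmetry of $T$ one cyclically permutes the normal index into the last slot: $S_{\nu b}n^\nu=T_{\nu bc}n^\nu v^c=Q_{bc}v^c$ (an even permutation, hence $+$), while $S_{a\mu}n^\mu=T_{a\mu c}n^\mu v^c=-Q_{ac}v^c$ (an odd permutation, hence $-$). Substituting these is exactly where the relative signs of the three $K$-terms get fixed, so this is the step I would check most carefully.

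Finally I would equate the two expressions for $S_{ab|d}$. The common term $T_{abc}v^c{}_{;d}$ cancels, and since $v^c$ is an arbitrary tangent vector it may be stripped off, yielding
\[
T_{abc|d} = T_{abc;d} - Q_{bc}K_{ad} + Q_{ac}K_{bd} - Q_{ab}K_{cd}\,,
\]
after using the symmetry $K_{db}=K_{bd}$ and $K_{dc}=K_{cd}$ to match the stated form. I do not expect any essential difficulty beyond the index bookkeeping; the only real obstacle is tracking the antisymmetry signs when reducing $S_{\nu b}n^\nu$ and $S_{a\mu}n^\mu$ to $Q$, exactly as flagged above.
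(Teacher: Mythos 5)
Your proposal is correct and is precisely the argument the paper intends: its proof consists of the single remark ``analogous to the proof of Theorem~\ref{thm:pochodna dwuformy}'', and you carry out exactly that analogy by contracting $T$ with a tangent vector $v$, applying the two-form identity \eqref{eq:poch kow dwuforma} to $S_{ab}=T_{ab\mu}v^\mu$, and comparing with the Leibniz expansion using $v^\mu{}_{|d}=v^c{}_{;d}+K_{dc}v^c n^\mu$. Your sign bookkeeping at the flagged step is also right: the cyclic (even) permutation gives $S_{\nu b}n^\nu=Q_{bc}v^c$ and the transposition (odd) gives $S_{a\mu}n^\mu=-Q_{ac}v^c$, which reproduces the stated relative signs of the three $K$-terms.
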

	
	\begin{proof}
		Analogous to the proof of the theorem \ref{thm:pochodna dwuformy}.
	\end{proof}
	\begin{thm}
		If a three-form $T$ on the manifold $N$ satisfies the equation $T_{\alpha\beta\left(\gamma|\delta\right)}=0$,
		then its pullback to the manifold $M$ satisfies the equation
		\[ 2T_{ab\left(c;d\right)} = 2Q_{ab}g_{cd}-Q_{ac}g_{bd}-Q_{ad}g_{bc}+Q_{bc}g_{ad}+Q_{bd}g_{ac} \]
		with a certain two-form $Q$.
	\end{thm}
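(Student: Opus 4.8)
The plan is to repeat, one rank higher, the computation that produced equation~\eqref{cyk rownanie} for two-forms. First I would note that the hypothesis $T_{\alpha\beta(\gamma|\delta)}=0$ holds for \emph{every} assignment of the Greek indices, so in particular it holds when all four free indices are tangent to $M$. Cleared of the factor $\tfrac12$, this reads
\[ T_{abc|d}+T_{abd|c}=0\,, \]
and this is the relation I want to convert into a statement about the intrinsic derivative ``$;$'' on $M$.

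Next I would substitute the covariant-derivative identity of Theorem~\ref{thm:pochodna kowariantna trojformy}, namely $T_{abc|d}=T_{abc;d}-Q_{bc}K_{ad}+Q_{ac}K_{bd}-Q_{ab}K_{cd}$ with $Q_{ab}=T_{ab\mu}n^{\mu}$, together with the same identity after interchanging $c$ and $d$. Adding the two copies and using the symmetry $K_{cd}=K_{dc}$, the two terms carrying $Q_{ab}$ combine into $-2Q_{ab}K_{cd}$, while the remaining four external-curvature terms survive unpaired; moving everything except $T_{abc;d}+T_{abd;c}$ to the other side gives
\[ T_{abc;d}+T_{abd;c}=2Q_{ab}K_{cd}-Q_{ac}K_{bd}-Q_{ad}K_{bc}+Q_{bc}K_{ad}+Q_{bd}K_{ac}\,. \]

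Finally I would invoke the standing assumption $K=C\tilde{g}$ of Theorem~\ref{thm:ads jest dobrze zanurzony}, replace each $K_{ab}$ by $Cg_{ab}$, and absorb the scalar $C$ into a rescaled two-form, exactly as $\tilde{q}_{a}=Cq_{a}$ was introduced in the two-form case; the ``certain two-form $Q$'' in the statement is then $C$ times the contraction $T_{ab\mu}n^{\mu}$. Since $2T_{ab(c;d)}=T_{abc;d}+T_{abd;c}$, the displayed relation is exactly the claimed identity. The whole argument is purely algebraic, so I do not expect a genuine obstacle; the only point demanding care is the bookkeeping of the signs and of the antisymmetry of $Q$ (inherited from the three-form $T$) when the five curvature terms are collected, so that the right-hand side comes out symmetric in the pair $(c,d)$, as the symmetry of the left-hand side forces.
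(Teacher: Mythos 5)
Your proposal is correct and takes essentially the same route as the paper: both symmetrize the covariant-derivative identity of Theorem~\ref{thm:pochodna kowariantna trojformy} over the pair $(c,d)$, use the hypothesis $T_{\alpha\beta(\gamma|\delta)}=0$ restricted to tangent directions to annihilate the left-hand side, and then pass from $K_{ab}$ to $g_{ab}$ via the standing assumption $K=C\tilde{g}$. If anything, you are slightly more explicit than the paper, which silently reuses the symbol $Q$ for the rescaled two-form $C\,T_{ab\mu}n^{\mu}$ in the final equation.
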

	
	\begin{proof}
		\begin{equation}
		2T_{ab\left(c|d\right)} =
		2T_{ab\left(c;d\right)}-2Q_{ab}K_{cd}+Q_{ac}K_{bd}+Q_{ad}K_{bc}-Q_{bc}K_{ad}-Q_{bd}K_{ac}\,.
		\end{equation}
		From this equation it follows that if $T$ satisfies $T_{\alpha\beta\left(\gamma|\delta\right)}=0$,
		then pullback $T$ to $M$ satisfies
		\begin{equation}
		2T_{ab\left(c;d\right)} =
		2Q_{ab}g_{cd}-Q_{ac}g_{bd}-Q_{ad}g_{bc}+Q_{bc}g_{ad}+Q_{bd}g_{ac}\,.\label{eq:cyk trojforma}
		\end{equation}
	\end{proof}
	
	\begin{defn}
		The three-form $T_{abc}$ satisfying equation \ref{eq:cyk trojforma}
		is called CYK three-form.
	\end{defn}

	\section{Five-dimensional case}
	
	In this section we don't take into considerations the surrounding
	manifold $N$. Additionally, all tensor fields are defined on $M$
	and $\dim M=5$. In this case we have the following well known theorems.
	\begin{thm}
		Hodge dual of the CYK three-form is a CYK tensor.
		\label{dualofcyktensor}
	\end{thm}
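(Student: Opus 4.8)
The plan is to recast both conformal Yano--Killing equations in terms of the exterior derivative $d$ and the codifferential $\delta$, and then to exploit the fact that the Hodge star intertwines these two operators. First I would set $F=\star T$, the two-form on the five-dimensional $M$ given by $F_{ab}=\frac{1}{3!}\epsilon_{abcde}T^{cde}$. The single structural fact that makes the whole argument work is that the volume form is covariantly constant, $\epsilon_{abcde;f}=0$, because $\overset{M}{\nabla}$ is metric and torsion-free; hence the Hodge star commutes with covariant differentiation, $(\star T)_{;f}=\star(T_{;f})$, the star acting only on the form indices.

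Next I would observe that the defining equation \eqref{eq:cyk trojforma} is nothing but the standard conformal Killing--Yano condition for a $p$-form $T$ written in index form,
\[ \overset{M}{\nabla}_X T=\tfrac{1}{p+1}\,\iota_X\,dT-\tfrac{1}{n-p+1}\,X^\flat\wedge\delta T\,, \]
with $p=3$, $n=\dim M=5$, and $Q$ proportional to $\delta T$; likewise \eqref{cyk rownanie} is exactly this condition for a two-form, with $\tilde q$ proportional to $\delta Q$. The point is that symmetrising the derivative index with one form index kills the totally antisymmetric $dT$-part and retains precisely the trace of the ``hook'' part, so the paper's equations and the $d$/$\delta$ form carry the same information. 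I would then record the standard Hodge identities: $\star$ interchanges $d$ and $\delta$ up to sign, $\star\,dT=\pm\,\delta\star T$ and $\star\,\delta T=\pm\,d\star T$, and it interchanges the interior product with the exterior product by a covector, $\star(\iota_X\alpha)=\pm\,X^\flat\wedge\star\alpha$ and $\star(X^\flat\wedge\beta)=\pm\,\iota_X\star\beta$.

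Finally I would apply $\star$ to the conformal Killing--Yano equation for $T$. Using that $\star$ commutes with $\overset{M}{\nabla}$ on the left-hand side and the two swap identities on the right, each term converts into the corresponding term for $F=\star T$: the piece $\iota_X\,dT$ becomes $X^\flat\wedge\delta F$ and the piece $X^\flat\wedge\delta T$ becomes $\iota_X\,dF$. The outcome is precisely the conformal Killing--Yano equation for $F$ with the complementary degree $q=n-p=2$, that is, equation \eqref{cyk rownanie} with $\tilde q\propto\delta F=\pm\star\,dT$; this exhibits the required one-form and shows that $F=\star T$ is a CYK tensor. The only delicate point, and the main obstacle, is the bookkeeping of the signs in the Hodge identities, which depend on the Lorentzian signature and on $\star\star=\pm 1$; but since every term on both sides is affected in the same structural way, the signs never alter the \emph{form} of the equation and are therefore immaterial to the conclusion. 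Alternatively one may dispense with the $d$/$\delta$ language and simply contract the covariant derivative of $F_{ab}=\frac{1}{3!}\epsilon_{abcde}T^{cde}$ with the right-hand side of \eqref{eq:cyk trojforma}, using $\epsilon_{;f}=0$ throughout; this is a routine, if longer, index computation that lands on the same result.
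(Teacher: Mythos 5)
Your plan is correct, but it takes a genuinely different route from the paper. The paper's appendix proof is a direct index computation: it contracts the three-form equation \ref{eq:cyk trojforma} with $\frac{1}{6}\epsilon_{ef}{}^{abc}$, inserts $T=\pm\ast\ast T$, and evaluates the double-epsilon contraction via the generalized Kronecker delta $\delta^{\tilde d kh}{}_{ef\tilde c}$, landing after explicit bookkeeping on $2\ast T_{e(f;d)}=\frac{1}{4}\left(2\chi_e g_{fd}-\chi_f g_{ed}-\chi_d g_{ef}\right)$ with $\chi_f=\ast T_f{}^{c}{}_{;c}$, which is the two-form CYK equation of Definition \ref{cykdef}. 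You instead pass to the invariant formulation $\overset{M}{\nabla}_X T=\frac{1}{p+1}\iota_X\,dT-\frac{1}{n-p+1}X^\flat\wedge\delta T$ and use that $\ast$ commutes with $\overset{M}{\nabla}$ (since $\nabla g=0$, $\nabla\epsilon=0$) and swaps $d\leftrightarrow\delta$ and $\iota_X\leftrightarrow X^\flat\wedge$ up to signs; this is essentially the general argument of Proposition 3.2 in arXiv:1104.0852, which the paper itself cites as the non-elementary alternative to its appendix. Your approach buys generality (arbitrary degree $p$ and dimension $n$, signature-independent) and makes the mechanism transparent: CYK means the traceless ``hook'' part of $\nabla T$ vanishes, a condition manifestly preserved by $\ast$. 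What it costs is the one step you assert rather than prove — the equivalence between the paper's symmetrized-index equations \ref{cyk rownanie}, \ref{eq:cyk trojforma} and the $d/\delta$ form. That equivalence is true and standard, but it is where the actual work hides (decomposing $\nabla T$ into totally antisymmetric, trace, and traceless-hook components and checking that symmetrization annihilates exactly the antisymmetric part while pinning $Q$, respectively $\tilde q$, to $\delta T$); the paper's computation avoids needing it by never leaving index form. Your remark that the Hodge-star signs are immaterial is sound for the stated conclusion, since the definition only requires the trace terms to hold \emph{for some} one-form $\tilde q$, so sign and normalization factors are absorbed into $\tilde q$; even the precise swapped coefficients $\frac{1}{p+1}\leftrightarrow\frac{1}{n-p+1}$ need not be tracked, because symmetrizing the derivative index kills any totally antisymmetric contribution regardless of its coefficient.
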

	The proof can be found in the appendix B but a general case is also given in
Proposition 3.2 in arXiv:1104.0852.

\begin{thm}
	If $k$ is conformal Killing one-form for the metric $g$ and $\Omega^2$ is a positive smooth function then $\Omega^2 k$ is a conformal Killing one-form for the metric $\Omega^2 g$. \label{confkilloneconf}
\end{thm}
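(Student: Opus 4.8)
The plan is to verify directly that the symmetrised covariant derivative of the rescaled form $\tilde{k}_a := \Omega^2 k_a$, taken with respect to the Levi-Civita connection $\tilde\nabla$ of $\tilde{g} := \Omega^2 g$, is proportional to $\tilde{g}$. Recall that the hypothesis "$k$ is a conformal Killing one-form for $g$" means $\nabla_{(a}k_{b)} = A g_{ab}$ for some function $A$, where $\nabla$ is the Levi-Civita connection of $g$; I must reproduce exactly this structure for the rescaled data, with $g$ replaced by $\tilde{g}$.

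First I would record the standard relation between the two connections. Writing $\Upsilon_a := \partial_a \ln\Omega$ and $\Upsilon^c := g^{cd}\Upsilon_d$, the Christoffel symbols transform as $\tilde\Gamma^c_{ab} = \Gamma^c_{ab} + \delta^c_a\Upsilon_b + \delta^c_b\Upsilon_a - g_{ab}\Upsilon^c$. This is the only external ingredient and is purely algebraic. Inserting it into $\tilde\nabla_a \tilde{k}_b = \partial_a(\Omega^2 k_b) - \tilde\Gamma^c_{ab}\,\Omega^2 k_c$, the $\Gamma$-part reassembles $\nabla_a k_b$, the derivative of the prefactor contributes $2\Omega^2\Upsilon_a k_b$, and after collecting terms one finds
\[
\tilde\nabla_a \tilde{k}_b = \Omega^2\bigl(\nabla_a k_b + \Upsilon_a k_b - \Upsilon_b k_a + g_{ab}\,\Upsilon^c k_c\bigr)\,.
\]

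The decisive observation — which I regard as the actual content rather than an obstacle — is what survives symmetrisation in $(a,b)$: the pair $\Upsilon_a k_b - \Upsilon_b k_a$ is antisymmetric and drops out, while $g_{ab}\Upsilon^c k_c$ is symmetric and stays. Using the hypothesis $\nabla_{(a}k_{b)} = A g_{ab}$ together with $\tilde{g}_{ab}=\Omega^2 g_{ab}$ then gives
\[
\tilde\nabla_{(a}\tilde{k}_{b)} = \Omega^2\bigl(\nabla_{(a}k_{b)} + g_{ab}\,\Upsilon^c k_c\bigr) = \bigl(A + \Upsilon^c k_c\bigr)\tilde{g}_{ab}\,,
\]
which is precisely the conformal Killing equation for $\tilde{k}=\Omega^2 k$ relative to $\tilde{g}$, with new conformal factor $\tilde{A} = A + \Upsilon^c k_c$. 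This completes the argument.

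As a sanity check — and an arguably cleaner conceptual route — one can raise the index and work with the vector field $K^a := g^{ab}k_b$. The one-form equation is equivalent to $\mathcal{L}_K g = 2A g$, and since $\mathcal{L}_K(\Omega^2 g) = K(\Omega^2)\,g + \Omega^2\,\mathcal{L}_K g$ is again proportional to $\tilde{g}$, the \emph{same} field $K$ is a conformal Killing vector for $\tilde{g}$. Lowering its index with $\tilde{g}$ returns $\tilde{g}_{ab}K^b = \Omega^2 k_a$, which yields the claim immediately. The only point demanding care in either approach is the bookkeeping of which metric raises and lowers indices: the factor $\Omega^2$ in the correct rescaling $k \mapsto \Omega^2 k$ is exactly what compensates for lowering with $\tilde{g}$ instead of $g$.
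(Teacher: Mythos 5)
Your proof is correct, and your primary argument takes a genuinely different route from the paper's. The paper proves this theorem exactly by your ``sanity check'': it passes to the vector field $X^i=g^{ij}k_j$, invokes as known the equivalence of $\nabla_{(a}k_{b)}=\lambda' g_{ab}$ with $\mathcal{L}_X g=\lambda g$, computes $\mathcal{L}_X(\Omega^2 g)=\bigl(\frac{\mathcal{L}_X\Omega^2}{\Omega^2}+\lambda\bigr)\Omega^2 g$, and then lowers the index with $\Omega^2 g$ to recover $b_i=\Omega^2 k_i$ --- a two-line argument whose force comes from the conformal-weight-zero nature of the vector field $X$. Your main computation instead verifies the one-form equation directly from the transformation law $\tilde\Gamma^c_{ab}=\Gamma^c_{ab}+\delta^c_a\Upsilon_b+\delta^c_b\Upsilon_a-g_{ab}\Upsilon^c$, and I checked the algebra: $\tilde\nabla_a\tilde k_b=\Omega^2\bigl(\nabla_a k_b+\Upsilon_a k_b-\Upsilon_b k_a+g_{ab}\Upsilon^c k_c\bigr)$ is right, the antisymmetric piece drops under symmetrisation, and the conclusion $\tilde\nabla_{(a}\tilde k_{b)}=\bigl(A+\Upsilon^c k_c\bigr)\tilde g_{ab}$ follows. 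What your route buys is self-containedness (you do not need to quote the Lie-derivative equivalence, which the paper leaves unproved), an explicit formula for the new conformal factor $\tilde A=A+\Upsilon^c k_c$, and, as a byproduct, the full unsymmetrised transformation of $\tilde\nabla_a\tilde k_b$, which is more information than the theorem asks for. What the paper's route buys is brevity and a conceptual explanation of the rescaling: the conformal Killing \emph{vector} is unchanged, and the factor $\Omega^2$ in $k\mapsto\Omega^2 k$ is purely an artifact of lowering the index with $\tilde g$ rather than $g$ --- a point you yourself make at the end, so your proposal in effect contains both proofs.
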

\begin{proof}
	Let's denote as $X$ the vector field associated with one-form $k$ as follows $X^i=g^{ij}k_j$ (in this proof indices $i,j$ go through all functions from our coordinate system). It is known that conformal Killing equation $\nabla_{(a}k_{b)}=\lambda^\prime g_{ab}$, where $\lambda^\prime$ is some function is equivalent to the equation $\mathcal L_X g=\lambda g$. Now lets compute $\mathcal L_X(\Omega^2 g)=\mathcal L_X(\Omega^2)g+\Omega^2 \mathcal L_Xg=(\frac{\mathcal L_X \Omega^2}{\Omega^2}+\lambda)(\Omega^2 g)$. This shows, that vector field $X$ is related to conformal Killing one-form $b$ for the metric $\Omega^2 g$. This one-form is equal to $b_i=\Omega^2 g_{ij}X^j=\Omega^2k_i$.
\end{proof}	
	
	\begin{thm}
		If $Q$ is a CYK tensor for the metric $g$, then $\Omega^{3}Q$ is
		a CYK tensor for the metric $\Omega^{2}g$.\label{thm:konforemnosc cyk tensorow}
	\end{thm}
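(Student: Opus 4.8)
The plan is to prove the statement by a direct comparison of the Levi-Civita connections of $g$ and of the rescaled metric $\hat g := \Omega^2 g$ on $M$. Writing $\Upsilon_a := \Omega_{,a}/\Omega$ for the logarithmic gradient of the conformal factor, and letting $\nabla$ (denoted by a semicolon elsewhere in the paper) be the Levi-Civita connection of $g$, the two connections differ by the standard conformal term
\[
\hat\Gamma^c{}_{ab} = \Gamma^c{}_{ab} + \delta^c_a\Upsilon_b + \delta^c_b\Upsilon_a - g_{ab}\Upsilon^c ,
\]
where $\Upsilon^c = g^{cd}\Upsilon_d$. Denoting by $\hat\nabla$ the connection of $\hat g$, the goal is to verify that $\hat Q := \Omega^3 Q$ satisfies the CYK equation of Definition~\ref{cykdef} with respect to $\hat g$.

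First I would apply the Leibniz rule together with the above Christoffel correction to the covariant derivative of $\hat Q$. A short computation gives
\[
\hat\nabla_c \hat Q_{ab} = \Omega^3\bigl[\, Q_{ab;c} + \Upsilon_c Q_{ab} - \Upsilon_a Q_{cb} - \Upsilon_b Q_{ac} + g_{ca}s_b - g_{cb}s_a \,\bigr],
\]
where $s_a := \Upsilon^d Q_{da}$. Here the exponent $3$ is already doing essential work: differentiating $\Omega^3$ produces $3\Upsilon_c Q_{ab}$, whereas the two Christoffel corrections contribute $-2\Upsilon_c Q_{ab}$, so that the net coefficient of the non-metric gradient term $\Upsilon_c Q_{ab}$ is exactly $1$.

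Next I would symmetrise in the pair $(b,c)$, i.e. add to the last display its image under $b\leftrightarrow c$, so as to form the left-hand side of the CYK equation. The antisymmetry of $Q$ performs the decisive cancellations: the pair $-\Upsilon_a Q_{cb}$ and $-\Upsilon_a Q_{bc}$ vanishes because $Q_{cb}+Q_{bc}=0$, and the gradient terms $\Upsilon_c Q_{ab}$ and $\Upsilon_b Q_{ac}$ cancel against their counterparts from the swapped expression. It is precisely this final cancellation that singles out the weight $3$: with any other power $w$ one is left with a residual $(w-3)(\Upsilon_c Q_{ab}+\Upsilon_b Q_{ac})$, which is not of metric type. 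What survives is
\[
\hat\nabla_c\hat Q_{ab} + \hat\nabla_b\hat Q_{ac} = \Omega^3\bigl[\,(Q_{ab;c}+Q_{ac;b}) + g_{ac}s_b + g_{ab}s_c - 2g_{bc}s_a \,\bigr].
\]

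Finally I would substitute the CYK equation satisfied by $Q$ for $g$, namely $Q_{ab;c}+Q_{ac;b}=2\tilde q_a g_{cb}-\tilde q_b g_{ac}-\tilde q_c g_{ab}$, and use $g_{ab}=\Omega^{-2}\hat g_{ab}$ to trade every factor of $g$ for $\hat g$. One power of $\Omega^2$ is absorbed in this passage, and the remaining single $\Omega$ is absorbed into the redefined one-form $\hat q_a := \Omega(\tilde q_a - s_a)$; collecting terms yields exactly
\[
\hat\nabla_c\hat Q_{ab} + \hat\nabla_b\hat Q_{ac} = 2\hat q_a\hat g_{cb} - \hat q_b\hat g_{ac} - \hat q_c\hat g_{ab},
\]
which is the CYK equation for $\hat Q$ with respect to $\hat g$. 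The only genuine obstacle is bookkeeping: one must track the antisymmetry of $Q$ and the metric contractions with care, since a single sign slip obscures the cancellations. Conceptually there is nothing deeper, and the statement fits the general pattern---already visible in the weight-$2$ result of Theorem~\ref{confkilloneconf} for one-forms---that a conformal Yano-Killing $p$-form carries conformal weight $p+1$, here $p=2$.
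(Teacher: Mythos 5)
Your proof is correct: the Christoffel correction for $\hat g = \Omega^2 g$, the weight counting, and the cancellations all check out, and the explicit one-form $\hat q_a = \Omega(\tilde q_a - s_a)$ with $s_a = \Upsilon^d Q_{da}$ does make $\Omega^3 Q$ satisfy Definition~\ref{cykdef} for $\hat g$. Be aware that the paper gives no in-text proof of Theorem~\ref{thm:konforemnosc cyk tensorow} at all---it simply defers to \cite{cykkerr}---and the argument there amounts to the same direct computation you carried out, packaged as a homogeneity statement for the operator $\mathcal{Q}_{\lambda\kappa\sigma}(Q,g)$ of equation~\ref{cyk rownanie doklandiej}, namely $\mathcal{Q}(\Omega^{3}Q,\Omega^{2}g)=\Omega^{3}\,\mathcal{Q}(Q,g)$, whose vanishing characterizes CYK tensors. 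Your version has two small advantages over that packaging: it exhibits the transformed one-form $\hat q$ explicitly instead of leaving it implicit in a trace, and your observation that a general weight $w$ leaves the non-metric residue $(w-3)\left(\Upsilon_c Q_{ab}+\Upsilon_b Q_{ac}\right)$ cleanly explains why the weight $p+1=3$ is forced, matching the weight-$2$ case for one-forms in Theorem~\ref{confkilloneconf}.
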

The proof can be found in \cite{cykkerr}.

	\subsection{Construction of conserved charges in asymptotically
		Anti-de Sitter space-times}
	\begin{defn}
		Tensor field $W$ is called spin-2 field if it satisfies
		\[ W_{\alpha\beta\gamma\delta}=W_{\gamma\delta\alpha\beta}
		=W_{[\alpha\beta][\gamma\delta]},\,\,W_{\alpha[\beta\gamma\delta]}=0,
		\,\,W_{\,\,\beta\alpha\delta}^{\alpha}=0,\,\,\nabla_{[\lambda}W_{\mu\nu]\alpha\beta}=0 \, . \]
	\end{defn}
	
	An example of the spin-2 field is Weyl tensor. In case of this tensor
	we also know that conformal transformations do not change Weyl components
	$W^{\alpha}{_{\beta\gamma\delta}}$ .
	
	Next theorem enables one to define conserved charges on
	space-times that are asymptotically similar to Anti-de Sitter space-time.
	\begin{thm}
		\label{thm:T zamkni=00003D00003D000119ta}If $Q$ is a CYK tensor
		and $W$ is a spin-2 field, then the three-form \\ $T_{\alpha\beta\gamma}=\frac{1}{2}\epsilon_{\alpha\beta\gamma}{^{\delta\sigma}}W_{\delta\sigma\mu\nu}Q^{\mu\nu}$
		is closed.
	\end{thm}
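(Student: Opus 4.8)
The plan is to recognize the three-form $T$ as a Hodge dual and to translate closedness into a divergence condition. Writing $F_{\delta\sigma}:=W_{\delta\sigma\mu\nu}Q^{\mu\nu}$, the antisymmetry of $W$ in its first pair makes $F$ a genuine two-form, and $T_{\alpha\beta\gamma}=\tfrac12\epsilon_{\alpha\beta\gamma}{}^{\delta\sigma}F_{\delta\sigma}$ is exactly its Hodge dual, $T=\star F$. Since $\star$ is invertible, the equality $\mathrm{d}T=\mathrm{d}{\star}F=0$ is equivalent to the vanishing of the codifferential $\delta F$, i.e.\ to $\nabla^{\delta}F_{\delta\sigma}=0$. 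Thus the whole statement reduces to showing that the two-form $F$ is divergence-free.

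Next I would expand $\nabla^{\delta}F_{\delta\sigma}=(\nabla^{\delta}W_{\delta\sigma\mu\nu})Q^{\mu\nu}+W_{\delta\sigma\mu\nu}\nabla^{\delta}Q^{\mu\nu}$ by the Leibniz rule and kill the two terms separately. For the first term I would establish the standard fact that a spin-2 field is divergence-free: contracting the differential identity $\nabla_{[\lambda}W_{\mu\nu]\alpha\beta}=0$ with $g^{\lambda\alpha}$ produces $\nabla^{\alpha}W_{\mu\nu\alpha\beta}$ together with two trace terms, and those trace terms vanish once one checks (using the pair symmetry, the antisymmetry in each pair, and the given condition $W^{\alpha}{}_{\beta\alpha\delta}=0$) that every single contraction of $W$ is zero. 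Reindexing via the pair symmetry $W_{\mu\nu\alpha\beta}=W_{\alpha\beta\mu\nu}$ then gives $\nabla^{\delta}W_{\delta\sigma\mu\nu}=0$, so the first term drops out.

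The second term is where the conformal Yano-Killing property enters. I would use that Definition~\ref{cykdef} is equivalent to the decomposition $\nabla_{\delta}Q_{\mu\nu}=\nabla_{[\delta}Q_{\mu\nu]}+(\text{metric})\times(\text{covector})$, the second piece being built from $g$ and the covector $\tilde q$ (the trace part, related to the divergence of $Q$). When contracted against $W_{\delta\sigma\mu\nu}$, the totally antisymmetric piece $\nabla_{[\delta}Q_{\mu\nu]}$ is annihilated, because antisymmetrizing a tensor with Weyl-type symmetries over any three of its four slots gives zero (a consequence of the first Bianchi identity $W_{\alpha[\beta\gamma\delta]}=0$ together with the pair symmetry); the metric-times-covector piece is annihilated because each of its terms forces a contraction of two indices of $W$ against $g$, producing a trace of $W$, which vanishes. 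Hence the second term is also zero, giving $\nabla^{\delta}F_{\delta\sigma}=0$ and therefore $\mathrm{d}T=0$.

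I expect the main obstacle to be this second term: one must organize $\nabla Q$ so that the CYK structure is visible and then verify the two purely algebraic facts about Weyl-type symmetry — that $W$ annihilates every totally antisymmetric rank-three tensor fed into three of its slots, and that every single trace of $W$ vanishes. Both are representation-theoretic identities following from the symmetries in the definition of the spin-2 field; once they are in hand, the divergence-free property of $W$ and the conformal Yano-Killing equation combine to give the result. As an alternative one could compute $\nabla_{[\rho}T_{\alpha\beta\gamma]}$ directly using the covariant-derivative identities established above, but the dual formulation keeps the algebra shortest.
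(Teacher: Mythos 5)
Your proposal is correct, and its skeleton coincides with the paper's: both set $F_{\delta\sigma}=W_{\delta\sigma\mu\nu}Q^{\mu\nu}$, observe that $T$ is the dual of $F$ so that $\mathrm{d}T=0$ reduces to $\nabla^{\delta}F_{\delta\sigma}=0$ (the paper phrases this via $\nabla_{\nu}F^{\nu\mu}=\frac{1}{6}\epsilon^{\nu\mu\alpha\beta\gamma}\partial_{[\nu}T_{\alpha\beta\gamma]}$ rather than the codifferential, but it is the same reduction), apply the Leibniz rule, and derive $\nabla^{\delta}W_{\delta\sigma\mu\nu}=0$ by contracting $\nabla_{[\lambda}W_{\mu\nu]\alpha\beta}=0$ and discarding traces, exactly as in the text. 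Where you genuinely diverge is the term $W_{\delta\sigma\mu\nu}\nabla^{\delta}Q^{\mu\nu}$: the paper introduces the operator $\mathcal{Q}_{\lambda\kappa\sigma}(Q,g)$ whose vanishing is the CYK condition and shows by a short symmetrization computation with the first Bianchi identity that $W^{\mu\nu\alpha\beta}\mathcal{Q}_{\alpha\beta\nu}=\frac{3}{2}W^{\mu\nu\alpha\beta}Q_{\alpha\beta;\nu}$, so that $\nabla_{\nu}F^{\mu\nu}=\frac{2}{3}W^{\mu\nu\alpha\beta}\mathcal{Q}_{\alpha\beta\nu}$ vanishes for a CYK tensor; you instead invoke the equivalent irreducible decomposition $\nabla_{\delta}Q_{\mu\nu}=\nabla_{[\delta}Q_{\mu\nu]}+(\text{trace terms built from }g\text{ and }\tilde{q})$ and annihilate each piece against $W$ by two algebraic lemmas (pair symmetry transports the antisymmetrization over slots $1,3,4$ to $W_{[\alpha\beta\gamma]\delta}=0$, which indeed follows from the cyclic identity and the other symmetries; and all single traces of $W$ vanish). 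Both routes are sound; yours buys conceptual transparency — it exhibits the CYK equation as precisely the vanishing of the one irreducible piece of $\nabla Q$ that a traceless tensor with Weyl symmetries can detect, so the hypothesis is visibly sharp — while the paper's buys self-containedness, since it never needs the equivalence of the CYK equation with that decomposition, which in your write-up is asserted rather than proved (it is standard in the CYK literature, but the precise coefficients of the trace part deserve a line of verification or a citation, e.g.\ to the Jezierski--{\L}ukasik paper cited in the text).
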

	
	\begin{proof}
		Let us define
		\begin{equation}
		F_{\mu\nu}=W_{\mu\nu\lambda\kappa}Q^{\lambda\kappa}\,,
		\end{equation}
		where $Q$ is a certain CYK tensor. We will show that
		\begin{equation}
		F^{\mu\nu}{_{;\nu}}=\frac{2}{3}W^{\mu\nu\alpha\beta}\mathcal{Q}_{\alpha\beta\nu}\,,\label{div F}
		\end{equation}
		where
		\begin{equation}
		 \mathcal{Q}_{\lambda\kappa\sigma}\left(Q,g\right)=Q_{\lambda\kappa;\sigma}+Q_{\sigma\kappa;\lambda}-\frac{1}{2}\left(g_{\lambda\sigma}Q^{\nu}{_{\kappa;\nu}}+g_{\kappa(\lambda}Q_{\sigma)}{^{\mu}}{_{;\mu}}\right)\,,\label{cyk rownanie doklandiej}
		\end{equation}
		so $\mathcal{Q}=0$ if $Q$ is a CYK tensor (it follows from contraction
		of a pair of indices in equation \ref{cyk rownanie}). We can prove
		the equation \ref{div F} in the following way
		\begin{eqnarray*}
			W^{\mu\nu\alpha\beta}\mathcal{Q}_{\alpha\beta\nu} &= &
			W^{\mu\nu\alpha\beta}\left(Q_{\alpha\beta;\nu}+Q_{\nu\beta;\alpha}\right)=
			\left(W^{\mu\nu\alpha\beta}+W^{\mu\alpha\nu\beta}\right)Q_{\alpha\beta;\nu}\\
			&=& \left(W^{\mu\nu\alpha\beta}+ \frac{1}{2}W^{\mu\alpha\nu\beta}-\frac{1}{2}W^{\mu\beta\nu\alpha}\right)Q_{\alpha\beta;\nu}\\
			&=& \left[\frac{3}{2}W^{\mu\nu\alpha\beta}-\frac{1}{2}\left(W^{\mu\nu\alpha\beta} +W^{\mu\alpha\beta\nu}+W^{\mu\beta\nu\alpha}\right)\right] Q_{\alpha\beta;\nu}\\
			&=& \frac{3}{2}W^{\mu\nu\alpha\beta}Q_{\alpha\beta;\nu}
		\end{eqnarray*}
		For this reason
		\begin{equation}
		\nabla_{\nu}F^{\mu\nu}=\nabla_{\nu}\left(W^{\mu\nu\alpha\beta}Q_{\alpha\beta}\right)
		=\left(\nabla_{\nu}W^{\mu\nu\alpha\beta}\right)Q_{\alpha\beta}
		+W^{\mu\nu\alpha\beta}\nabla_{\nu}Q_{\alpha\beta} \, .
		\end{equation}
		Let us notice that if we contract indices $\mu$ and $\alpha$ in
		the equation
		\begin{equation}
		\nabla_{\lambda}W_{\mu\nu\alpha\beta}+\nabla_{\mu}W_{\nu\lambda\alpha\beta}
		+\nabla_{\nu}W_{\lambda\mu\alpha\beta}=0 \, ,
		\end{equation}
		then we will end up with
		\begin{equation}
		\nabla_{\alpha}W_{\lambda\mu}{^{\alpha}}{_{\beta}}=0\,,
		\end{equation}
		and finally
		\begin{equation}
		\nabla_{\nu}F^{\mu\nu}=W^{\mu\nu\alpha\beta}\mathcal{Q}_{\alpha\beta;\nu}=\frac 2 3 W^{\mu\nu\alpha\beta}\mathcal{Q}_{\alpha\beta\nu}\,.
		\end{equation}
		In our case $Q$ is a CYK tensor, so
		\begin{equation}
		\nabla_{\nu}F^{\mu\nu}=0\,.
		\end{equation}
		We can always express $F$ as $F=\ast T$. We can use identity $\ast\ast F=(-1)^s F$, 
  $(-1)^s:=\sgn\det(g)$ (valid for two-forms $F$ in five-dimensional pseudo-Riemannian space) to obtain
		\begin{equation}
		T=\left(-1\right)^{s}\ast\!F\,.\label{eq:trojforma}
		\end{equation}
		
		Next we have
		\begin{equation}
		\nabla_{\nu}F^{\nu\mu}=
		\frac{1}{6}\nabla_{\nu}\left(\epsilon^{\nu\mu\alpha\beta\gamma}T_{\alpha\beta\gamma}\right)=
		\frac{1}{6}\epsilon^{\nu\mu\alpha\beta\gamma}\nabla_{\nu}T_{\alpha\beta\gamma}
		=\frac{1}{6}\epsilon^{\nu\mu\alpha\beta\gamma}\partial_{[\nu}T_{\alpha\beta\gamma]} \, .
		\end{equation}
		We can change covariant derivatives to partial derivatives
		%in our coordinate system
		because Christoffel symbols are symmetric in their
		lower indices. This shows that $\partial_{[\nu}T_{\alpha\beta\gamma]}=0$,
		and therefore ${\rm d}T=0$.
	\end{proof}
\subsection{A way to construct a quasi-local charge}
	In the theorem \ref{thm:T zamkni=00003D00003D000119ta} we have found a way to obtain a closed three-form. For space-time
	which is asymptotically similar to Anti-de Sitter space-time (it
	means that there exist coordinate system in which metric is similar
	to Anti-de Sitter metric close to infinity, see \cite{key-2}) we can construct
	asymptotic CYK tensor which asymptotically satisfy CYK equation from the definition \ref{cykdef}.
	%(Yano-Killing equation).
	In this way, if we also have spin-2 field on our asymptotically
	Anti-de Sitter space-time (for instance Weyl tensor), then we can construct asymptotically closed
	three-form. We can now consider the slice of constant time. We
	integrate our three-form on a large three-dimensional sphere belonging to this
	slice and located in the asymptotic region. We will end up with a quantity that asymptotically doesn't depend
	on the size of this sphere or rather approaches (possibly finite) limit at infinity. This way we obtain some quasi-local charge. It
	turns out, that if we change the metric $g$ (by conformal rescaling) to the $\Omega^{2}g$, and
	if our spin-2 field $W$ is chosen to be Weyl tensor, then the corresponding the form $T$
	transforms to $\Omega^{2}T$. \label{konforemnosc}
	
	%\part{Results}
	
	\section{CYK tensors in coordinate systems}
	
	It follows from our past considerations that on $\textrm{AdS5}$ one
	can find CYK tensors as pullbacks of constant two-forms on $\tilde{V}$
	and as Hodge duals of pullbacks of constant three-forms on the surrounding
	space $\tilde{V}$.
	
	We will use  the convention that indices $a,b,c$ go from $1$ to $3$,
	indices $i,j,k$ go from $1$ to $4$, indices $\mu, \nu, \lambda$ go
	from $0$ to $4$, and indices $A,B,C$ go from $0$ to $5$.
	
	\subsection{Poincar{\`e} coordinate system}
	
	Let us consider a parametrization of Anti-de Sitter space-time with coordinates
	$t$, $x^{1}$, $x^{2}$, $x^{3}$, $y$. This means that $t$ has index $0$, $x^{1}$
	has index $1$ and so on. Quantity $l$ is a parameter that is related
	to the size of Anti-de Sitter. This parameter is also a part of equation  $\left(X,X\right)=-l^{2}$
	(this is equation \ref{-lkw})
	defining $\textrm{AdS5}$. %Anti-de Sitter.
	In these coordinates
	we have
	\begin{align*}
	X^{0} & =\frac{1}{2y}(y^{2}+l^{2}+ \| \bar{x}\|^{2}-t^{2})\,,\\
	X^{a} & =\frac{x^{a}}{y}l \, , \quad a\in\{1,2,3\}\,,\\
	X^{4} & =\frac{1}{2y}(y^{2}-l^{2}+\|\bar{x}\|^{2}-t^{2})\,,\\
	X^{5} & =\frac{t}{y}l\,,
	\end{align*}
	where metric on the space $\tilde{V}$ is equal to
	\begin{equation}
	{\rm d}s^{2}=-\left({\rm d}X^{0}\right)^{2}+\sum_{k=1}^{4}\left({\rm d}X^{k}\right)^{2}-\left({\rm d}X^{5}\right)^{2}\,.
	\end{equation}
	%Anti-de Sitter
	$\textrm{AdS5}$ is a locus
	\begin{equation}
	-l^{2}=-\left(X^{0}\right)^{2}+\sum_{k=1}^{4}\left(X^{k}\right)^{2}-\left(X^{5}\right)^{2}\,.
	\end{equation}
	It turns out, that in these coordinates the induced metric is conformally
	flat and equal to
	\begin{equation}
	{\rm d}s^{2}=\frac{l^{2}}{y^{2}}(-{\rm d}t^{2}+{\rm d}y^{2}+{\rm d}\bar{x}^{2})\,,
	\end{equation}
	where $\|\bar{x}\|^{2}=\left(x^{1}\right)^{2}+\left(x^{2}\right)^{2}+\left(x^{3}\right)^{2}$,
	whereas ${\displaystyle {\rm d}\bar{x}^{2}=\sum_{a=1}^{3}\left({\rm d}x^{a}\right)^{2}}$.
	For this reason, if we choose conformal factor $\Omega=\frac{l}{y}$ in the theorem \ref{thm:konforemnosc cyk tensorow}, then
	we see, that CYK tensors on $\textrm{AdS5}$ %Anti-de Sitter
	divided by $\Omega^{3}$
	are CYK tensors on the five-dimensional Minkowski space-time. Let's denote
	the pullbacks of constant two-forms as %$C$,
	\begin{equation}
	C_{AB}:=\mathfrak{i}^{\ast}\left({\rm d}X^{A}\wedge{\rm d}X^{B}\right)\,,
	\end{equation}
	and the Hodge
	duals of pullbacks of constant three-forms as %$H$
	\begin{equation}
	H_{ABC}:=\ast\mathfrak{i}^{\ast}\left({\rm d}X^{A}\wedge{\rm d}X^{B}\wedge{\rm d}X^{C}\right)\,,
	\end{equation}
	where $\mathfrak{i}$ is an immersion of the Anti-de Sitter space-time
	into $6$ dimensional ambient  vector space $V$.
	
	Let's adopt the following notation
	: %\begin{comment}
	%\begin{gather*}
	$D=x^{a}{\rm d}x^{a}+y{\rm d}y-t{\rm d}t$, $\mathcal{D}=x^{a}{\rm d}x^{a}+y{\rm d}y$,
	$\tau_{a}={\rm d}x^{a}$, $\tau_{4}={\rm d}y$, $\mathcal{K}_{a}=x^{a}\mathcal{D}-\frac{1}{2}\left(\bar{x}^{2}+y^{2}\right)\tau_{a}$,
	$\mathcal{K}_{4}=y\mathcal{D}-\frac{1}{2}\left(\bar{x}^{2}+y^{2}\right)\tau_{4}$,
	$\mathcal{L}_{ab}=x^{a}{\rm d}x^{b}-x^{b}{\rm d}x^{a}$, $\mathcal{L}_{a4}=x^{a}{\rm d}y-y{\rm d}x^{a}$.
	%\end{gather*}
	%\end{comment}
	
	%\begin{minipage}[t]{0.4\columnwidth}%
	%\begin{gather*}
	%D=x^{a}dx^{a}+ydy-tdt\\
	%\mathcal{D}=x^{a}dx^{a}+ydy\\
	%\tau_{a}=dx^{a}\\
	%\tau_{4}=dy
	%\end{gather*}
	%%
	%\end{minipage}%
	%\begin{minipage}[t]{0.4\columnwidth}%
	%\begin{gather*}
	%\mathcal{K}_{a}=x^{a}\mathcal{D}-\frac{1}{2}\left(\bar{x}^{2}+y^{2}\right)\tau_{a}\\
	%\mathcal{K}_{4}=y\mathcal{D}-\frac{1}{2}\left(\bar{x}^{2}+y^{2}\right)\tau_{4}\\
	%\mathcal{L}_{ab}=x^{a}dx^{b}-x^{b}dx^{a}\\
	%\mathcal{L}_{a4}=x^{a}dy-ydx^{a}\, .
	%\end{gather*}
	%%
	%\end{minipage}
	We calculated the tensors $C_{AB}$ and
	$H_{ABC}$ in Mathematica.
	We can express them in the above notation. Let's consider an
	array of numbers $\epsilon^{ijkl}$ which gives to the collection
	of indices $i,j,k,l\in\left\{ 1,2,3,4\right\} $ the sign of the permutation associated with them or zero (if this collection of indices is not a permutation,
	then the result is 0). We also use here the old summation
	convention. This means that we contract the same indices even if they
	are on the same level.
	
	\[
	C_{0,4}=\Omega^{3}\frac{1}{l}\left[\tau_{4}\wedge\left(-D\right)\right]
	\]
	\[
	C_{0,5}=\Omega^{3}\frac{1}{l^{2}}\left[-{\rm d}t\wedge\mathcal{K}_{4}+\frac{1}{2}l^{2}dt\wedge\tau_{4}+t\tau_{4}\wedge\mathcal{D}-\frac{1}{2}t^{2}\tau_{4}\wedge{\rm d}t\right]
	\]
	\[
	C_{4,5}=\Omega^{3}\frac{1}{l^{2}}\left[{\rm d}t\wedge\left(-\mathcal{K}_{4}\right)-\frac{1}{2}l^{2}{\rm d}t\wedge\tau_{4}+t\tau_{4}\wedge\mathcal{D}-\frac{1}{2}t^{2}\tau_{4}\wedge{\rm d}t\right]
	\]
	\[
	C_{0,a}=\Omega^{3}\frac{1}{l^{2}}\left[\mathcal{L}_{a,4}\wedge D+\frac{1}{2}(D,D)\tau_{a}\wedge\tau_{4}-\frac{1}{2}l^{2}\tau_{4}\wedge\tau_{a}\right]
	\]
	\[
	 C_{a,4}=\Omega^{3}\frac{1}{l^{2}}\left[-\frac{1}{2}l^{2}\tau_{4}\wedge\tau_{a}+D\wedge\mathcal{L}_{a,4}+\frac{1}{2}D^{2}\tau_{4}\wedge\tau_{a}\right]
	\]
	\[
	C_{a,5}=\Omega^{3}\frac{1}{l}\left[{\rm d}y\wedge\left(t{\rm d}x^{a}-x^{a}{\rm d}t\right)+y{\rm d}x^{a}\wedge{\rm d}t)\right]
	\]
	\[
	C_{ab}=\Omega^{3}\frac{1}{l}\left[{\rm d}y\wedge(\mathcal{L}\subbb{ba})+y{\rm d}x^{a}\wedge{\rm d}x^{b}\right]
	\]
	\[
	H_{0,4,5}=\Omega^{3}\frac{\sgn y}{l}\left[\frac{1}{2}\epsilon\suppp{ijk4}x_i\tau\subbb j\wedge\tau\subbb k\right]
	\]
	\[
	H_{0,d,4}=\Omega^{3}\frac{\sgn y}{l}\left[-\frac{1}{2}t\epsilon\suppp{dij4}\tau\subbb i\wedge\tau\subbb j-\frac{1}{2}\epsilon^{dab4}\mathcal{L}_{a,b}\wedge{\rm d}t\right]
	\]
	\[
	H_{0,d,5}=\frac{1}{2}\Omega^{3}\frac{\sgn y}{l^{2}}\left[\epsilon^{dab4}\left(-\mathcal{L}_{a,b}\wedge D-\frac{1}{2}D^{2}\tau_{a}\wedge\tau_{b}+\frac{1}{2}l^{2}\tau_{a}\wedge\tau_{b}\right)\right]
	\]
	\[
	H_{d,4,5}=\frac{1}{2}\Omega^{3}\frac{\sgn y}{l^{2}}\left[\epsilon^{dab4}\left(\mathcal{L}_{a,b}\wedge D+\frac{1}{2}D^{2}\tau_{a}\wedge\tau_{b}+\frac{1}{2}l^{2}\tau_{a}\wedge\tau_{b}\right)\right]
	\]
	\[
	H_{0,d}=\Omega^{3}\frac{\sgn y}{l^{2}}\left[t\mathcal{D}\wedge{\rm d}x^{d}+\left(-\mathcal{K}\subbb d+\frac{1}{2}\left(-l^{2}+t^{2}\right){\rm d}x^{d}\right)\wedge{\rm d}t\right]
	\]
	\[
	H_{d,4}=\Omega^{3}\frac{\sgn y}{l^{2}}\left[t\mathcal{D}\wedge{\rm d}x^{d}+\left(-\mathcal{K}\subbb d+\frac{1}{2}\left(l^{2}+t^{2}\right){\rm d}x^{d}\right)\wedge{\rm d}t\right]
	\]
	\[
	H_{d,5}=\Omega^{3}\frac{\sgn y}{l}\left[\dd x\suppp d\wedge D\right]
	\]
	\[
	H_{1,2,3}=\Omega^{3}\frac{\sgn y}{l}\left[{\rm d}t\wedge\mathcal{D}\right]
	\]
	All those CYK tensors are written in the form $\alpha\left[\beta\right]$
	where $\alpha$ consists of conformal coefficient multiplied by locally
	constant terms like $\sgn y$ and $l$. Theorem \ref{thm:konforemnosc cyk tensorow}
	ensures us that that $\beta$ is a CYK tensor for the metric $\Omega^{-2}\left(\frac{l^{2}}{z^{2}}(-{\rm {\rm d}}t^{2}+{\rm d}y^{2}+{\rm d}\bar{x}^{2})\right)=-{\rm d}t^{2}+{\rm d}y^{2}+{\rm d}\bar{x}^{2}$
	which is equal to the five-dimensional Minkowski metric.
	
	\subsection{Spherical coordinate system}
	
	Coordinates $y,x^{1},x^{2},x^{3},t$ are not convenient because we
	are interested in the form of the CYK tensors on the conformal
	verge -- scri. That means that we want to set $y$ equal to 0. The scri
	of $\textrm{AdS5}$ has the topology of $\mathbb{R}\times S^{3}$,
	however in those coordinates the sphere is parameterized inconveniently.
	For this reason we consider the following parametrization
	\begin{align}
	X_{0} & =\sqrt{l^{2}+r^{2}}\cos\left(\frac{t}{l}\right)\, ,\\
	X_{k} & =rn^{k} \, ,\\
	X_{5} & =\sqrt{l^{2}+r^{2}}\sin\left(\frac{t}{l}\right)\, ,
	\end{align}
	where $\sum_{i=1}^{4}\left(n^{i}\right)^{2}=1$. That means that $n^{k}$
	can be parameterized with 3 angles. We also introduce the coordinate
	$z$ which replaces the coordinate $r$
	\begin{equation}
	r=l\frac{1-z^{2}}{2z}\, , \quad z\in[0,1] \, . \label{r od z w ads}
	\end{equation}
	The choice of this coordinate is justified by the observation that
	it solves the equation
	\begin{equation}
	\left\Vert \frac{l{\rm d}z}{z}\right\Vert ^{2}=1\,.\label{eq:  eikona=00003D00003D00003D000142}
	\end{equation}
	This means that in the conformally equivalent  metric $\frac{z^{2}}{l^{2}}g$ the coordinate $z$
	is easily related  to the distance from the center of {\rm AdS5}.
	In those coordinates Anti-de Sitter metric equals
	\begin{equation}
	g=\frac{l^{2}}{z^{2}}\left({\rm d}z^{2}-\left(\frac{1+z^{2}}{2}\right)^{2}\frac{{\rm d}t^{2}}{l^{2}}+\left(\frac{1-z^{2}}{2}\right)^{2}{\rm d}\Omega_{3}\right)\,.\label{ads w sferycznych z n}
	\end{equation}
	We can divide it by the conformal factor $\frac{l^{2}}{z^{2}}=\Omega^{2}$,
	and then go to the conformal scri $z=0$. Scri is a manifold
	that has metric defined up to conformal rescaling (this ambiguity
	arises because we could divide the Anti-de Sitter metric by an arbitrary
	conformal factor). In those coordinates scri $\mathbb{R}\times S^{3}$
	is conveniently parameterized because $t$ parameterizes $\mathbb{R}$,
	whereas $(n_{k})_{k\in\{1,2,3,4\}}$ parameterize $S^{3}$.
	
	During
	calculations involving Hodge dual we used the following coordinates
	\begin{align}
	X_{0} & =l\sqrt{1+\frac{\left\Vert {p}\right\Vert ^{2}}{l^{2}}}\cos\left(\frac{t}{l}\right)\, , \\
	X_{k} & =p^{k} \, , \quad k\in\{1,2,3,4\}\, , \\
	X_{5} & =l\sqrt{1+\frac{\left\Vert {p}\right\Vert ^{2}}{l^{2}}}\sin\left(\frac{t}{l}\right)\,,
	\end{align}
	where we denoted ${\displaystyle \|p\|^{2}:=\sum_{i=1}^{4}\left(p{^{i}}\right)^{2}}$,
	and then we expressed the resulting CYK tensors through functions $t$, $z$, $n^{1}$, $n^{2}$, $n^{3}$, $n^{4}$ and their exterior derivatives.
	
	This way we obtained CYK tensors on Anti-de Sitter space-time. We adhered to
	our convention that Latin indices go from 1 to 4. We carried out calculations
	in Mathematica.

	\hspace*{-1cm} \begin{eqnarray}
	C_{0,k}&=&{\rm d}t\wedge\left(\frac{l\left(z^{2}+1\right)^{2}\sin\left(\frac{t}{l}\right)n^{k}}{4z^{3}}{\rm d}z+\frac{l\left(z^{4}-1\right)\sin\left(\frac{t}{l}\right)}{4z^{2}}{\rm d}n^{k}\right)-\frac{l^{2}\left(z^{2}-1\right)^{2}\cos\left(\frac{t}{l}\right)}{4z^{3}}{\rm d}z\wedge{\rm d}n^{k}\nonumber \\	
	C_{0,5}&=&-\frac{l\left(z^{4}-1\right)}{4z^{3}}{\rm d}t\wedge{\rm d}z\label{eq: c05}\\
	C_{i,j}&=&\frac{l^{2}\left(z^{4}-1\right)}{4z^{3}}{\rm d}z\wedge\left(n^{i}{\rm d}n^{j}-n^{j}{\rm d}n^{i}\right)+\frac{l^{2}\left(z^{2}-1\right)^{2}}{4z^{2}}{\rm d}n^{i}\wedge{\rm d}n^{j}\nonumber\\
	C_{k,5}&=&{\rm d}t\wedge\left(\frac{l\left(z^{2}+1\right)^{2}\cos\left(\frac{t}{l}\right)n_{k}}{4z^{3}}{\rm d}z+\frac{l\left(z^{4}-1\right)\cos\left(\frac{t}{l}\right)}{4z^{2}}{\rm d}n^{k}\right)+\frac{l^{2}\left(z^{2}-1\right)^{2}\sin\left(\frac{t}{l}\right)}{4z^{3}}{\rm d}z\wedge{\rm d}n^{k}\nonumber
\end{eqnarray}
\hspace*{-1cm} \begin{eqnarray}
	H_{0,i,j}&= &
	\frac{1}{2}\epsilon_{ijkl}\Bigr[\left(\frac{l\left(z^{2}-1\right)^{2}
		\left(z^{2}+1\right)\cos\left(\frac{t}{l}\right)}{8z^{3}}{\rm d}t-\frac{l^{2}\left(z^{2}-1\right)\sin\left(\frac{t}{l}\right)}{2z^{2}}{\rm d}z\right)\wedge\left(n^{l}{\rm d}n^{k}-n^{k}{\rm d}n^{l}\right) \nonumber\\
	& & -\frac{l^{2}\left(z^{2}-1\right)^{2}\sin\left(\frac{t}{l}\right)\left(z^{2}+1\right)}{8z^{3}}{\rm d}n^{l}\wedge{\rm d}n^{k})\Bigl]\nonumber\\
	H_{5,i,j}  &= & \frac{1}{2}\epsilon_{ijkl}\Bigr[\left(\frac{l\left(z^{2}-1\right)^{2}\left(z^{2}+1\right)\sin\left(\frac{t}{l}\right)}{8z^{3}}{\rm d}t+\frac{l^{2}\left(z^{2}-1\right)\cos\left(\frac{t}{l}\right)}{2z^{2}}{\rm d}z\right)\wedge\left(n^{l}{\rm d}n^{k}-n^{k}{\rm d}n^{l}\right)\nonumber\\
	& & +  \frac{l^{2}\left(z^{2}-1\right)^{2}\left(z^{2}+1\right)\cos\left(\frac{t}{l}\right)}{8z^{3}}{\rm d}n^{l}\wedge{\rm d}n^{k}\Bigl]\nonumber\\
	H_{ijk}&=&\epsilon_{ijkl}dt\wedge\left(\frac{l\left(z^{2}+1\right)n^{l}}{2z^{2}}{\rm d}z+\frac{l\left(z^{6}+z^{4}-z^{2}-1\right)}{8z^{3}}{\rm d}n^{l}\right)\nonumber\\
	H_{0,m,5}&=&-\frac{l^{2}\left(z^{2}-1\right)^{3}}{16z^{3}}\left(\epsilon_{mijk}n^{k}{\rm d}n^{i}\wedge{\rm d}n^{j}\right)\nonumber
	\end{eqnarray}

\section{Analysis of the five-dimensional black hole with negative cosmological
constant}

\subsection{Energy as the mass charge}

Let us consider the solution of Einstein equations with negative cosmological
constant of the spherically symmetric black hole. The metric is equal
to
\begin{equation}
{\rm d}s^{2}=-\left(\frac{r^{2}}{l^{2}}+1-\frac{2m}{r^{2}}\right){\rm d}t^{2}+\left(\frac{r^{2}}{l^{2}}+1-\frac{2m}{r^{2}}\right)^{-1}{\rm d}r^{2}+r^{2}{\rm d}\Omega_{3}\,,\label{schwartzfielda}
\end{equation}
see e.g. equation (2.1) in \cite{key-3}. Here ${\rm d}\Omega_{3}$ denotes
the metric of the unit three-dimensional sphere. It turns out that if
we use the CYK tensor $C_{05}$ from the equation \ref{eq: c05} and
the Weyl tensor of the metric \ref{schwartzfielda}, we will end up with
three-form $T$ from the theorem \ref{thm:T zamkni=00003D00003D000119ta}
equal to
\begin{gather}
T=\frac{12m}{l}\omega\, ,
\end{gather}
where $\omega$ is the volume
three-form of the three-dimensional unit sphere. This result was calculated in Mathematica.
Therefore the quasilocal
charge equals to $\frac{24m\pi^{2}}{l}$. It means that mass of the
Anti-de Sitter is related with (asymptotic) CYK tensor $\frac{l}{24\pi^{2}}C_{05}=-\frac{l^{2}\left(z^{4}-1\right)}{96\pi^{2}z^{3}}{\rm d}t\wedge{\rm d}z$.
In this case the three-form \ref{eq:trojforma} does not depend on $z$
and is closed, so the energy charge in this case is not only asymptotic -- it is exact.
In the asymptotically flat case we have the so called  ADM mass defined as
\[
m_{ADM}:=\frac{1}{6\pi^{2}}\int_{S^{3}}\left(g_{ij,i}-g_{ii,j}\right){\dd}S^{j}\,.
\]
The coefficient $\frac{1}{6\pi^{2}}$ arises from the volume of three-dimensional sphere and from the coefficient in the Einstein equation in this dimension (see Appendix D in \cite{CJK2}).
More precisely,
\[2\gamma = \frac {2(n-1)\omega_{n-1}}{n-2} = \left\{ \begin{array}{c} 16\pi \quad \mbox{for} \; n=3\\ 6\pi^2 \quad \mbox{for} \; n=4 \end{array} \right. \, . \]
We think that in our case, which is not asymptotically flat, we should also multiply the result of integral on sphere by such  factor.
This means, that it is sufficient
to take $\frac{l}{4}C_{05}$ in the definition of the CYK~tensor
responsible for energy. In that case the asymptotic three-form will be
equal to the ADM form.

\subsection{Canonical coordinates on five-dimensional black hole with negative cosmological
constant}

Let us try to find the solution of the equation (\ref{eq:  eikona=00003D00003D00003D000142})
for the metric (\ref{schwartzfielda}). We have
\begin{gather}
1=\left\Vert \frac{l}{z}\dd z\right\Vert ^{2}=\left(\frac{l}{z}\right)^{2}\left(\frac{\partial z}{\partial r}\right)^{2}\left\Vert \dd r\right\Vert ^{2}=\frac{l^{2}}{z^{2}}\left(\frac{\partial z}{\partial r}\right)^{2}\left(\frac{r^{2}}{l^{2}}+1-\frac{2m}{r^{2}}\right)\kropka
\end{gather}
Because we expect that $z\sim\frac{1}{r}$, we demand $z>0$ and $\frac{\partial z}{\partial r}<0$.
For this reason we have
\begin{gather}
\frac{\partial z}{\partial r}=-\frac{z}{l}\frac{1}{\sqrt{\frac{r^{2}}{l^{2}}+1-\frac{2m}{r^{2}}}}\\
\log z+C=-\int\frac{\dd r}{l\sqrt{\frac{r^{2}}{l^{2}}+1-\frac{2m}{r^{2}}}}\kropka
\end{gather}
We substitute $w=\frac{l}{r}$ to obtain
\begin{gather}
-\int\frac{\dd r}{l\sqrt{\frac{r^{2}}{l^{2}}+1-\frac{2m}{r^{2}}}}=\int\frac{\dd w}{w^{2}\sqrt{w^{-2}+1-\frac{2mw^{2}}{l^{2}}}}=\int\frac{\dd w}{w\sqrt{1+w^{2}-w^{4}\frac{2m}{l^{2}}}}\kropka
\end{gather}
Denoting $b:=\frac{2m}{l^{2}}$ we get
\begin{gather}
z=\exp\left(\int\frac{\dd w}{w\sqrt{1+w^{2}-bw^{4}}}\right)\kropka\label{eikona=00003D00003D00003D000142 schwartzfield}
\end{gather}
It is easy to notice that if $z$ satisfies equation (\ref{eq:  eikona=00003D00003D00003D000142}),
then $\alpha z$ with $\alpha$ being arbitrary constant also satisfies
that equation. For this reason the constant arising from the integral
in equation (\ref{eikona=00003D00003D00003D000142 schwartzfield}) is
irrelevant. For small $w$ we have $\int\frac{\dd w}{w\sqrt{1+w^{2}-bw^{4}}}\simeq\log w$
so $z\simeq w$. We can also calculate the asymptotic
\begin{gather}
\int\frac{\dd w}{w\sqrt{1+w^{2}-bw^{4}}}\simeq\int\frac{\dd w}{w}\left(1-\frac{1}{2}\left(w^{2}-bw^{4}\right)+\frac{3}{8}\left(w^{2}-bw^{4}\right)^{2}\right)\nonumber \\
\simeq\int\frac{\dd w}{w}\left(1-\frac{1}{2}w^{2}+\left(\frac{1}{2}b+\frac{3}{8}\right)w^{4}\right)\simeq\log w-\frac{1}{4}w^{2}+\left(\frac{1}{8}b+\frac{3}{32}\right)w^{4}
\end{gather}
\begin{eqnarray}
z & = & w\exp\left(-\frac{1}{4}w^{2}+\left(\frac{1}{8}b+\frac{3}{32}\right)w^{4}\right)\nonumber \\
 & = & w\left(1-\frac{1}{4}w^{2}+\left(\frac{1}{8}b+\frac{3}{32}+\frac{1}{32}\right)w^{4}\right)\nonumber \\
 & = & w-\frac{1}{4}w^{3}+\frac{1}{8}\left(b+1\right)w^{5}+\cdots\,.
\end{eqnarray}
It is easy to check, that if $z=w+\alpha w^{3}+\beta w^{5}+\cdots$,
then $w=z-\alpha z^{3}+\left(3\alpha^{2}-\beta\right)z^{5}+\cdots$.
Therefore we have
\begin{gather}
w=z+\frac{1}{4}z^{3}+\frac{1-2b}{16}z^{5}\kropka
\end{gather}

We can now express the Schwarzschild metric in the coordinates $w$, $t$
and angles. We get the following metric
\begin{gather}
\dd s^{2}=\left(-1-w^{-2}+bw^{2}\right)\dd t^{2}+\left(\frac{l^{2}}{-bw^{6}+w^{4}+w^{2}}\right)\dd w^{2}+\left(\frac{l}{w}\right)^{2}{\rm d}\Omega\kropka
\end{gather}
From the construction we know that
\[
\left(\frac{l^{2}}{-bw^{6}+w^{4}+w^{2}}\right)\dd w^{2}=\left(\frac{l}{z}\right)^{2}\dd z^{2}\kropka
\]
For this reason we can now write everything in terms of $z$. We will
obtain the approximation of the real metric. Let's calculate the coefficient
that multiplies $\dd t^{2}$. Substituting $w=z+\omega z^{3}+\tau z^{5}$, we obtain
\begin{gather}
-1-w^{-2}+bw^{2}=-1-\left(z+\omega z^{3}+\tau z^{5}\right)^{-2}+b\left(z+\omega z^{3}+\tau z^{5}\right)^{2}\nonumber \\
= \frac{1}{z^{2}}\left[-z^{2}-\left(1+\omega z^{2}+\tau z^{4}\right)^{-2}+bz^{2}\left(z+\omega z^{3}+\tau z^{5}\right)^{2}\right]\nonumber \\
\simeq \frac{1}{z^{2}}\left[-z^{2}-1-1\left(-2\right)\left(\omega z^{2}+\tau z^{4}\right)-2\cdot3/2\left(\omega z^{2}\right)^{2}+bz^{4}\right]\nonumber \\
\simeq\frac{1}{z^{2}}\left[-1+z^{2}\left(-1+2\omega\right)+z^{4}\left(2\tau-3\omega^{2}+b\right)\right]
\nonumber \\
=\frac{1}{z^{2}}\left[-1-\frac{1}{2}z^{2}+\left(2\cdot\frac{1-2b}{16}-\frac{3}{16}+b\right)\right]\nonumber \\
= \frac{1}{z^{2}}\left[-1-\frac{1}{2}z^{2}+\left(-\frac{1}{16}+\frac{3}{4}b\right)z^{4}\right] \, .
\label{przy dt2 w schwartzfieldzie}
\end{gather}
Comparing this coefficient with analogous coefficient in the equation
(\ref{ads w sferycznych z n}) we see that we should get
$-\left(\frac{1+z^{2}}{2}\right)^{2}$.
However, looking at the equation (\ref{ads w sferycznych z n}) we see
that in the equation (\ref{przy dt2 w schwartzfieldzie}) coordinate
$z$ is $2$ times bigger, because it is behaving like $z\sim\frac{1}{r}$.
For this reason we are introducing $z=2\tilde{z}$. Now we have
\begin{gather}
-1-w^{-2}+bw^{2}=\frac{1}{\tilde{z}^{2}}\left[-\frac{1}{4}-\frac{1}{2}\tilde{z}^{2}+\left(-\frac{1}{4}+3b\right)\tilde{z}^{4}\right]\kropka
\end{gather}
This result is in accordance with equation (\ref{ads w sferycznych z n})
when $b=0$ (that is when $m=0$).

Now we only need to calculate the coefficient that multiplies $\dd\Omega$.
We have
\begin{gather}
\left(\frac{l}{w}\right)^{2}=\left(\frac{l}{z}\right)^{2}\left(1+\omega z^{2}+\tau z^{4}\right)^{-2}\nonumber \\
\simeq\left(\frac{l}{z}\right)^{2}\left(1-2\left(\omega z^{2}+\tau z^{4}\right)+3\left(\omega z^{2}\right)^{2}\right)\nonumber \\
=\left(\frac{l}{z}\right)^{2}\left(1-2\omega z^{2}+z^{4}\left(-2\tau+3\omega^{2}\right)\right)\nonumber \\
=\left(\frac{l}{z}\right)^{2}
\left(1-\frac{1}{2}z^{2}+z^{4}\left(-2\frac{1-2b}{16}+\frac{3}{16}\right)\right)\nonumber \\
=\left(\frac{l}{z}\right)^{2}\left(1-\frac{1}{2}z^{2}+\left(\frac{1}{16}+\frac{1}{4}b\right)z^{4}\right)
\, .
\end{gather}
Again let's express it with function $\tilde{z}$, so we get
\begin{gather}
\left(\frac{l}{w}\right)^{2}=\left(\frac{l}{\tilde{z}}\right)^{2}\left[\frac{1}{4}-\frac{1}{2}\tilde{z}^{2}+\left(\frac{1}{4}+b\right)\tilde{z}^{4}\right]\kropka
\end{gather}
This result is in accordance with equation (\ref{ads w sferycznych z n}).
We finally obtain
\begin{gather}
{\rm d}s^{2}\simeq\left(\frac{l}{\tilde{z}}\right)^{2}\left[\left(3b\tilde{z}^{4}-\left(\frac{1+\tilde{z}^{2}}{2}\right)^{2}\right)\left(\frac{\dd t}{l}\right)^{2}+\dd\tilde{z}^{2}+\left(\left(\frac{1-\tilde{z}^{2}}{2}\right)^{2}+b\tilde{z}^{4}\right)\dd\Omega\right]\,.
\end{gather}
This metric is in accordance with the equation (\ref{ads w sferycznych z n}).
It turns out, that our metric differs from Anti-de Sitter metric by
terms that have rank 4 in $\tilde{z}$. We now use the fact that $b=\frac{2m}{l^{2}}$
to express the metric in terms of $m$. We get
\begin{equation}
{\rm d}s^{2}\simeq \left(\frac{l}{\tilde{z}}\right)^{2}
\left[\left(\frac{6m}{l{^{2}}}\tilde{z}^{4}-\left(\frac{1+\tilde{z}^{2}}{2}\right)^{2}\right)\left(\frac{\dd t}{l}\right)^{2}+\dd\tilde{z}^{2}
+\left(\left(\frac{1-\tilde{z}^{2}}{2}\right)^{2}
+\frac{2m}{l{^{2}}}\tilde{z}^{4}\right)\dd\Omega\right] \, . \label{dziura asymptotyka}
\end{equation}
The above asymptotic form is in accordance with the general form of the asymptotically
Anti-de Sitter metrics given by (1.2) in %the publication
\cite{key-2} which simply means that Schwarzschild-AdS is an asymptotically
Anti-de Sitter space-time.

\section{Conclusion}
We propose a new construction of CYK tensors in AdS5 using the observation that {\em constant tensors} in the ambient space restricted to the pseudosphere AdS5 generate all solutions of CYK equation.
We would like to stress that Theorems in Section 3 are nice tools and we show in Appendix how to use them to construct in explicit form standard 4D conformal covector fields. One can argue that AdS5 is conformally equivalent to 5D Minkowski, hence, using conformal transformation, we can translate solution in flat space to the solution in constant curvature space. However, the construction of solutions in flat space and corresponding conformal rescaling is not so simple. We think that our construction is simple and natural, one can say that the CYK tensors in AdS are simpler than in flat Minkowski because they are naturally obtained from constant tensors. Obviously, CYK tensors in Minkowski can be reconstructed from AdS via conformal transformation or by limiting procedure (in the tangent space).

It turns out that in the case of five-dimensional Anti-de Sitter space-time one
can carry out constructions very similar to those in the four-dimensional
case. Specifically in the five-dimensional case one can find all conformal
Yano-Killing tensors in a way that is analogous to the reasoning in %the publication
\cite{chas} which solves the same problem in the four-dimensional case.

CYK tensors obtained in {\rm AdS5} enable us
to define quasi-local charges that have good asymptotic properties.
We have chosen the CYK tensor which defines the energy for the example
of five-dimensional Schwarzschild blackhole. Probably in the case of
the five-dimensional Kerr blackhole with negative cosmological constant
it is possible to find CYK tensor which is responsible for the angular
momentum, but it seems to be quite heavy calculation, so it won't
be analyzed in this publication.

\vspace{0.5 cm}

	\noindent {\sc Acknowledgements}
This work was supported in part by Narodowe Centrum Nauki (Poland) under Grant No. 2016/21/B/ST1/00940.

\appendix
%dummy comment inserted by tex2lyx to ensure that this paragraph is not empty

\section{Conformal Killing one-forms in four dimensions}

We will now find all conformal Killing one-forms on four-dimensional
Anti-de Sitter space-time. According to equation (1.8) in \cite{stab} Anti-de Sitter metric has the following form
\begin{gather}
\tilde g=\frac{l^2}{\cos^2x}\left(-{\rm d}t^{2}+{\rm d}x^{2}+\sin^{2}\left(x\right)\sigma\right)\,. \label{prawdziwa}
\end{gather}
where $\sigma$ is a metric on a two-dimensional sphere, $l\in \mathbb{R}$  is a size of our Anti-de Sitter spacetime, $t\in\mathbb{R},\,x\in\left[0,\pi\right]$.
$t$ has index $0$, coordinates on the sphere have indices $1$ and
$2$, and $x$ has index $3$.
According to the theorem \ref{confkilloneconf} we can find conformal Killing one-forms on the conformally equivalent metric
\begin{gather}
g=-{\rm d}t^{2}+{\rm d}x^{2}+\sin^{2}\left(x\right)\sigma\, , \label{konfrow}
\end{gather}
and later multiply found one-forms by the conformal factor $\frac{l^2}{\cos^2x}$. This our strategy.

We are now abandoning previous index conventions. Let's denote that indices $A,B$ go from $1$ to $2$. They are used
to parameterize the sphere. We will also use $;$ to denote covariant derivatives on the spheres of constant $x$ and $t$ (using metric connection of this sphere). Greek indices go from $0$ to $3$. Additionally $\eta$ is a metric induced
on the slice of constant time. We have therefore $\eta_{AB}=\sin^{2}x\cdot\sigma_{AB}$,
$\eta_{3A}=0$ and $\eta_{33}=1$. Latin indices denote spacial coordinates. In the following (unless stated otherwise) we will use $|$ to denote covariant derivative on the spacial slice (slice with constant $t$).
We see that according to the theorem \ref{cof konforemnej jest konforemne}
if we pull back conformal Killing one-form to the slice of constant
time, we will obtain conformal Killing one-form on the slice. This
is so because the slice has external curvature equal to 0. Analogously
if we pull back conformal Killing one-form to the slice of constant
time and constant $x$, we will obtain conformal Killing one-form on
this slice.

We need to calculate Christoffel symbols  $\Gamma^{\alpha}{_{\beta\gamma}}$ for the metric $g$.
Coefficients $\Gamma^{A}{_{BC}}$ are the same as Christoffel symbols
on the unit sphere. $\Gamma^{A}{_{3B}}=\frac{1}{2}\eta^{AC}\eta_{CA|3}=\cot x\delta_{B}^{A}$,
$\Gamma^{3}{_{AB}}=-\frac{1}{2}g^{33}\eta_{AB|3}=-\cot x\eta_{AB}$.\\
 Those are all non vanishing Christoffel symbols $\Gamma{^{\alpha}}{_{\beta\gamma}}$.

\subsection{Two dimensional problem on the sphere}

$\xi$ is our conformal Killing one-form on the whole space-time (with metric $g$ conformally equivalent to the metric of Anti-de Sitter).
Let's now consider its pullback to the slice of constant $t$ and
$x$ (to the sphere). As we previously stated, this restriction is
a conformal Killing one-form on the sphere, so it satisfies the equation
\begin{gather}
\xi_{A;B}+\xi_{B;A}-\eta_{AB}\xi^{C}{_{;C}}=0\,,\label{eq:trzecie}
\end{gather}
where $;$ denotes covariant derivative on the sphere (for a given
point we pull back one-form to the sphere and then we covariantly
differentiate it using metric connection on the sphere). Equality
(\ref{eq:trzecie}) means, that $\xi_{A}$ is a dipole one-form. This
can be shown in the following way. %Equation \ref{Riemana} holds
For every (pseudo)-Riemannian manifold $K$ with Riemann tensor $R$ and one-form $\omega$ we have
\begin{gather}
\omega_{c|ba}-\omega_{c|ab}=R_{abc}{^{d}}\omega_{d}\, . \label{Riemana}
\end{gather}
We are using here $|$ to denote Levi-Civita derivative on the manifold $K$.

This can be generalized to the following identity
\begin{gather}
2T^{\alpha_{1}\alpha_{2}\cdots\alpha_{n}}{_{\beta_{1}\cdots\beta_{m}|\left[ba\right]}}=\sum_{k=1}^{m}R_{ab\beta_{k}}{^{d}}T^{\alpha_{1}\cdots\alpha_{n}}{_{\beta_{1}\cdots d\cdots\beta_{m}}}-\sum_{k=1}^{n}R_{abd}{^{\alpha_{k}}}T^{\alpha_{1}\cdots d\cdots\alpha_{n}}{_{\beta_{1}\cdots\beta_{m}}}\,.
\end{gather}
In the future calculations we will use the character $|$ to denote
the covariant derivative on the spacial part of Anti-de Sitter space-time with respect to the metric  induced from  the metric $g$ which is conformally equivalent to the metric of Anti-de Sitter.

We are choosing a sphere of constant $x$ and  $t$. This sphere
has natural structure of pseudo-Riemannian manifold which
is the same as the structure of sphere, which has radius $r=\sin x$
(that means that there exists isomorphism from the category of pseudo-Riemannian
manifolds between a sphere of constant $x$ and $t$ and a sphere of
the radius $r=\sin x$). We know that Ricci tensor on the sphere
of the radius $r$ equals to $R_{AB}=\frac{1}{r^{2}}\eta_{AB}$. For
this reason Riemann tensor equals $R_{ABCD}=\frac{1}{r^{2}}\left(\eta_{AC}\eta_{BD}-\eta_{AD}\eta_{BC}\right)$.
In the following calculations for the sphere we will lower and rise
the indices using metric $\eta$ induced from the metric $g$ from the equation \ref{konfrow}. $\epsilon_{AB}$
will denote the volume form of our sphere with respect to the metric
$\eta$. According to the Hodge--Kodaira theorem each one-form can
be expressed as the sum of external derivative, coderivative and harmonic
form. It is also well known that there are no harmonic one-forms on the
sphere. For this reason we have
\begin{gather}
\xi_{A}=\overset{1}{v}_{;A}+\epsilon_{A}{^{B}}\overset{2}{v}_{;B}\,,\label{eq:hodgea kodairy}
\end{gather}
where $\overset{1}{v}$ are $\overset{2}{v}$ are some functions.

We will show that $\xi{_{A}}$ is a dipole function. We can find
functions $\overset{1}{v}$ and $\overset{2}{v}$ in the following
way
\begin{equation}
\xi_{A}{^{;A}}=\overset{1}{v}_{;A}{^{A}}\,,\label{eq:  div pom pom}
\end{equation}
(we are using convention that all indices to the right of the $;$
are differentiating the tensor field) and
\begin{gather}
\xi_{A;C}\epsilon^{AC}=\epsilon_{A}{^{B}}\overset{2}{v}_{;BC}\epsilon^{AC}=\overset{2}{v}_{;A}{^{A}}\,.\label{eq: rot pom pom}
\end{gather}
Therefore if we will show that $\xi_{A;C}\epsilon^{AC}$ and $\xi_{A}{^{;A}}$
are dipole functions, then also $\xi_{A}$ will be a dipole one-form.

Let us differentiate equation \ref{eq:trzecie} covariantly with index
$A$ at the top. We will end up with
\begin{eqnarray}
0& =& \xi_{A;B}{^{A}}+\xi_{B;A}{^{A}}-\xi^{C}{_{;CB}}\nonumber \\
 & = & \xi_{A}{^{;A}}{_{B}}+R^{A}{_{BA}}{^{D}}\xi_{D}+\xi_{B;A}{^{A}}-\xi^{C}{_{;CB}}\nonumber \\
 & = & \frac{1}{r^{2}}\xi_{B}+\xi_{B;A}{^{A}}=0\,.\label{eq:trzecie zrozniczkowane zwezone}
\end{eqnarray}
Equality (\ref{eq:trzecie zrozniczkowane zwezone}) can be written in
the form
\begin{gather}
\left(\overset{0}{\Delta}+1\right)\xi=0\,,\label{zedipol}
\end{gather}
where $\overset{0}{\Delta}$ denotes Laplacian created from the structure
of the metric of the unit sphere, and $1$ denotes identity operator.
This means that we are pulling back $\xi$ to the sphere, and then
we are  using the metric of the unit sphere to calculate Laplacian
of resulting $\xi_{A}$. Equality (\ref{zedipol}) means that $\xi_{A}$
is a dipole one-form. Now let's covariantly differentiate the equation
(\ref{eq:trzecie zrozniczkowane zwezone}) and then contract resulting
index with $B$.
\begin{eqnarray}
0 &= &\frac{1}{r^{2}}\xi_{B}{^{;B}}+ \xi_{B;A}{^{BA}}+R^{BA}{_{B}}{^{D}}\xi_{D;A}+R^{BA}{_{A}}{^{D}}\xi_{B;D} \nonumber \\
&= &\frac{1}{r^{2}}\xi_{B}{^{;B}}+\xi_{B;A}{^{BA}} \nonumber \\
&= & \frac{1}{r^{2}}\xi_{B}{^{;B}}+\xi_{B}{^{;B}}{_{A}}{^{A}}+ \left(R^{B}{_{AB}}{^{D}}\xi_{D}\right)^{;A}=\xi_{B}{^{;B}}{_{A}}{^{A}}+\frac{2}{r^{2}}\xi_{A}{^{;A}}\label{eq:  nie rotacyjna pom}
\end{eqnarray}
This proves that $\xi_{A}{^{;A}}$ is a dipole function.

To prove that $\xi_{B;C}\epsilon^{BC}$ is a dipole
function we start with the following identity
\begin{gather}
R_{CAB}{^{D}}\epsilon^{BC}=-\frac{1}{r^{2}}\delta_{C}{^{D}}g_{AB}\epsilon^{BC}=-\frac{1}{r^{2}}\epsilon_{A}{^{D}}\,.\label{eq:pom}
\end{gather}
We will now differentiate covariantly the equality (\ref{eq:trzecie zrozniczkowane zwezone})
and then contract the result with $\epsilon^{BC}$. We will end up
with
\begin{eqnarray}
0 & = & \frac{1}{r^{2}}\xi_{B;C}\epsilon^{BC}+\xi_{B;A}{^{A}}{_{C}}\epsilon^{BC}\nonumber \\
 & = & \frac{1}{r^{2}}\xi_{B;C}\epsilon^{BC}+\xi_{B;AC}{^{A}}\epsilon^{BC}+R_{C}{^{A}}{_{B}}{^{D}}\xi_{D;A}\epsilon^{BC}+R_{C}{^{A}}{_{A}}{^{D}}\xi_{B;D}\epsilon^{BC}\nonumber \\
 & = & \frac{1}{r^{2}}\xi_{B;C}\epsilon^{BC}+\xi_{B;AC}{^{A}}\epsilon^{BC}-\frac{1}{r^{2}}\epsilon^{AD}\xi_{D;A}-\frac{1}{r^{2}}\xi_{B;D}\epsilon^{BD}\nonumber \\
 & = & \frac{1}{r^{2}}\xi_{B;C}\epsilon^{BC}+\xi_{B;CA}{^{A}}\epsilon^{BC}+R_{CAB}{^{D}}\xi_{D}{^{;A}}\epsilon^{BC}\nonumber \\
 & = & \frac{1}{r^{2}}\xi_{B;C}\epsilon^{BC}+\left(\xi_{B;C}\epsilon^{BC}\right)_{;A}{^{A}}-\frac{1}{r^{2}}\epsilon_{A}{^{D}}\xi_{D}{^{;A}}\nonumber \\
 & = & \frac{2}{r^{2}}\xi_{B;C}\epsilon^{BC}+\left(\xi_{B;C}\epsilon^{BC}\right)_{;A}{^{A}}\,.\label{eq: rotacyjna pom}
\end{eqnarray}
The last equality means that $\xi_{B;C}\epsilon^{BC}$ is a dipole
function.

Equations (\ref{eq: rotacyjna pom}), (\ref{eq:  nie rotacyjna pom}),
(\ref{eq:  div pom pom}) and (\ref{eq: rot pom pom}) prove that $\Delta\Delta\overset{1}{v}$
and $\Delta\Delta\overset{2}{v}$ are dipole functions. Dipole functions
belong to the eigenspace of Laplacian with non zero eigenvalue. That is why
Laplacian $\Delta$ acts on them as an isomorphism. For this reason
functions  $\overset{1}{v}$ and $\overset{2}{v}$ are sums of dipole
functions and elements of the kernel of $\Delta$, which are monopole
functions. According to the equation (\ref{eq:hodgea kodairy}) monopole
parts of these functions don't matter because in the equation (\ref{eq:hodgea kodairy})
functions $\overset{1}{v}$ and $\overset{2}{v}$ are differentiated.

\subsection{Expanding to the spacial slice}

In this section we will denote covariant derivative with respect to the slice of constant time with the character $|$. We have to remember here that we are using the metric \ref{konfrow}. The conformal Killing one-forms on the sphere enable one
to find conformal Killing one-forms on the whole slice of the constant
$t$. We are calculating the covariant derivatives
\[
\xi_{k|l}=\xi_{k,l}-\Gamma^{i}{_{kl}}\xi_{i} \, ,
\]
\[
\xi_{3|3}=\xi_{3,3} \, ,
\]
\[
\xi_{3|A}=\xi_{3,A}-\Gamma^{i}{_{3A}}\xi_{i}=\xi_{3,A}-\cot x\,\xi_{A} \, ,
\]
\[
\xi_{A|3}=\xi_{A,3}-\cot x\,\xi_{A} \, ,
\]
\[
\xi_{A|B}=\xi_{A;B}-\Gamma^{3}{_{AB}}\xi_{3}=\xi_{A;B}+\cot x\,\eta_{AB}\xi_{3} \, .
\]
On the spacial slice we have
\begin{gather}
\xi_{k|l}+\xi_{l|k}=\alpha\cdot\eta_{kl}\,.
\end{gather}
Let us calculate spacial derivative
\begin{gather}
\xi^{k}{_{|k}}=\eta^{33}\xi_{3|3}+\eta^{AB}\xi_{A|B}=
\xi_{3,3}+\xi^{A}{_{;A}}+2\cot x\cdot\xi_{3} \, , \label{eq:trojdywergencja}
\end{gather}
We have $2\xi^{k}{_{|k}}=3\alpha$ so
\begin{equation}
\alpha=\frac{2}{3}\xi^{k}{_{|k}}\,.\label{alpha na podstawie ciecia stalegoczasu}
\end{equation}
It follows from the conformal Killing equation that
\begin{equation}
\xi_{3|A}+\xi_{A|3}=0\, ,\label{eq:pom3}
\end{equation}
\begin{equation}
2\xi_{3|3}=\alpha\, ,\label{pom}
\end{equation}
\[
\xi_{A|B}+\xi_{B|A}=\alpha\eta_{AB}\, ,
\]
so contracting the last equation with $\eta^{AB}$ we get
\begin{equation}
\eta^{AB}\xi_{A|B}=\alpha\,.\label{eq:pom2}
\end{equation}
We derive equation (\ref{eq:pierwsze}) from the equation (\ref{eq:pom3}),
whereas combined equations (\ref{pom}) and (\ref{eq:pom2}) lead to $\xi^{A}{_{;A}}+2\cot x\cdot\xi_{3}=\eta^{AB}\xi_{A|B}=\alpha=2\xi{_{3,3}}$.
This last equation is equivalent to the equation (\ref{eq:drugie}).\\[2ex]
\fbox{\begin{minipage}[c]{13cm}%
%- 2\fboxsep - 2\fboxrule}
\begin{gather}
\xi_{3,A}+\xi_{A,3}-2\cot x\cdot\xi_{A}=0\label{eq:pierwsze}
\end{gather}
\begin{gather}
2\xi_{3,3}=\xi^{A}{_{;A}}+2\cot x\cdot\xi_{3}\label{eq:drugie}
\end{gather}
\phantom{XX} %
\end{minipage}}\\[2ex]

We apply covariant derivative $;A$ to the equation
(\ref{eq:pierwsze}) and  we obtain %the equation %\noindent
\begin{gather}
\xi_{3;A}{^{A}}+\xi_{A;B,3}\eta^{AB}-2\cot x\cdot\xi_{A}{^{;A}}=0\,.\label{eq:o x itrzy}
\end{gather}
Here we used the fact that partial derivative $\partial_{3}$ and
covariant derivative $;A$ commute. This is the consequence of the
fact, that Christoffel symbols do not depend on $x$. Equation (\ref{eq:o x itrzy})
proves that $\Delta\xi_{3}$ is a dipole function, because the rest
of this equality is a dipole function. For this reason $\xi_{3}$
has only monopole and dipole parts.

We can also rewrite the equation (\ref{eq:drugie}) in the form
\begin{gather}
2\left(\frac{\xi_{3}}{\sin x}\right)_{,3}\sin x=\xi^{A}{_{;A}}\,.\label{eq:pierwszeprzeksztalcone}
\end{gather}
We see therefore that monopole and dipole parts of $\xi\subbb 3$
evolve independently. Let's rewrite the equality (\ref{eq:pierwszeprzeksztalcone})
in the form
\begin{gather}
2\left(\frac{{^{m}}\xi_{3}}{\sin x}\right)_{,3}\sin x=0\\
2\left(\frac{{^{d}}\xi_{3}}{\sin x}\right)_{,3}\sin x=\xi^{A}{_{;A}}\,,\label{eq:xi dipolowaa}
\end{gather}
where %$\xi{^{\textrm{mon}}}{_{3}}$
${^{m}}\xi_{3}$ is the monopole part of $\xi{_{3}}$, whereas %$\xi{_{3}}{^{\textrm{dipo}}}$
${^{d}}\xi_{3}$ is the dipole part of $\xi{_{3}}$. From the first
of those equations we get
\begin{gather}
{^{m}}\xi_{3}=a\sin x\,,
\end{gather}
where $a$ is some constant. It is denoted with small letter because
it is a monopole function. Functions that are dipole will be denoted
with capital letters.

We can also rewrite the equation (\ref{eq:pierwsze}) in the form
\begin{gather}
\xi_{3}{^{;A}}+\xi^{A}{_{,3}}=0\,.\label{eq:dziwne}
\end{gather}
Equation \ref{eq:dziwne} was obtained by noticing that equation (\ref{eq:dziwne})
can be expressed as $\xi^{A}{_{,3}}=\left(\sin^{-2}x\,\sigma^{AB}\xi_{B}\right)_{,3}=\sin^{-2}x\,\sigma^{AB}\xi_{B}{_{,3}}-2\sin^{-3}x\cos x\cdot\sigma^{AB}\xi_{B}$
(here $\sigma{^{AB}}$ is a metric inverse to $\sigma{_{AB}}$) so
it looks like $\xi_{B,3}-2\cot x\xi_{B}$ with raised index. It is
worth to remember that here $\xi_{3}$ is treated as a function, so
in the expression $\xi_{3}{^{;A}}$ covariant derivative acts as partial
derivative. We remember that in the covariant derivative $;A$ Christoffel
symbols $\Gamma^{A}{_{BC}}$ are independent from $x$. This means
that covariant derivative $;A$ commutes with $\partial_{3}$. For
this reason we can calculate covariant derivative $;A$ of the equation
\ref{eq:dziwne} and contract the indices. We end up with
\begin{gather}
\xi_{3}{^{;A}}{_{A}}+\left(\xi^{A}{_{;A}}\right)_{,3}=0\,.\label{eq:nowedziwne}
\end{gather}
This equation can be rewritten
as follows
\begin{gather}
\frac{1}{\sin^{2}x}\overset{0}{\triangle}\xi_{3}+\left(\xi^{A}{_{;A}}\right)_{,3}=0\,,
\end{gather}
where $\overset{0}{\triangle}$ denotes Laplacian on the unit sphere.
We therefore see that
\begin{gather}
\left(\xi^{A}{_{;A}}\right)_{,3}=2\frac{\suppp d\xi_{3}}{\sin^{2}x}\,.\label{eq:zmanadywergencji}
\end{gather}
By combining equations (\ref{eq:zmanadywergencji}) and (\ref{eq:xi dipolowaa})
we get
\begin{gather}
\left(\left(\frac{\suppp d\xi_{3}}{\sin x}\right)_{,3}\sin x\right)_{,3}=\frac{\suppp d\xi_{3}}{\sin^{2}x}\,.\label{eq:naksi}
\end{gather}
This equation can be solved. We introduce $u=\frac{\suppp d\xi_{3}}{\sin x}$
and $z=\int\frac{dx}{\sin x}$. Now the equation (\ref{eq:naksi}) has
the following form $\frac{d^{2}u}{dz^{2}}=u$. It has solution $u=Be^{z}+Ce^{-z}$,
where $B,C$ are independent from $z$. This can be written as $u=\log\left(\sin\left(\frac{x}{2}\right)\right)-\log\left(\cos\left(\frac{x}{2}\right)\right)$.
\begin{gather}
u=B\frac{\sin\left(\frac{x}{2}\right)}{\cos\left(\frac{x}{2}\right)}+C\frac{\cos\left(\frac{x}{2}\right)}{\sin\left(\frac{x}{2}\right)}
\end{gather}
\begin{gather}
\suppp d\xi_{3}=\sin\left(x\right)\left(B\tan\frac{x}{2}+C\cot\frac{x}{2}\right)
\end{gather}
In the last equation left hand side is a dipole function, so $A,B$
are also  dipole functions.
\begin{eqnarray}
\xi_{3} &= & \suppp m\xi_{3}+\suppp d\xi_{3}\nonumber \\ &= &
a\sin x+\sin\left(x\right)\left(B\tan\frac{x}{2}+C\cot\frac{x}{2}\right)=a\sin x+2\left(B\sin^{2}\frac{x}{2}+C\cos^{2}\frac{x}{2}\right) \nonumber \\ &= &
a\sin x+2\left(B\sin^{2}\frac{x}{2}+C\cos^{2}\frac{x}{2}\right) \nonumber \\ &= &
a\sin x+B\left(1-\cos x\right)+C\left(1+\cos x\right) \nonumber \\ &= &
a\sin x+K-J\cos x
\end{eqnarray}
Constants $K$ and $J$ are replacing the constants $B$ and $C$
in the following way: $K=B+C$ and $J=B-C$.

We now use the equation (\ref{eq:drugie}) to get
\begin{gather}
\xi^{A}{_{;A}}=2\xi_{3,3}-2\cot x\cdot\xi_{3} \, , \nonumber \\
2a\cos x+2J\sin x-2\cot x\left(a\sin x+K-J\cos x\right)\nonumber %\\
 = -2K\cot x+2J\frac{1}{\sin x} \, .
\end{gather}
Now we only need to find the rotational part of $\xi_{A}$. In the
previous subsection we defined $\epsilon$ as the volume form of the
sphere of constant $x$ and $t$. Now we want to think about the $\epsilon$
as a tensor on the whole Anti-de Sitter space-time with conformally equivalent metric from the equation \ref{konfrow}. We define $\epsilon$ on
the whole Anti-de Sitter space-time by imposing relations $\epsilon_{\mu3}=\epsilon_{\mu0}=0$,
$\epsilon_{\mu\nu}=-\epsilon_{\nu\mu}$, and $\epsilon_{AB}$ is the
metric volume form of the sphere of constant $x$ and $t$ with respect
to the metric induced $g$ from the equation \ref{konfrow}. It is easy to calculate
that $\epsilon{^{AB}}_{,3}=-2\textrm{ctg}x\epsilon^{AB}$. We are
differentiating covariantly the equation (\ref{eq:pierwsze}) with respect
to the $;B$ index, and then we are contracting resulting equation
with $\epsilon^{AB}$. By taking into account the commutation of the
$\partial_{3}$ and $;A$ we get
\begin{gather}
\xi_{A;B,3}\epsilon^{AB}-2\cot x\cdot\xi_{A;B}\epsilon^{AB}=0\, .
\end{gather}
We have therefore
\begin{gather}
\xi_{A;B,3}\epsilon^{AB}= \left(\xi_{A;B}\epsilon^{AB}\right)_{,3}-\xi_{A;B}\epsilon^{AB}{_{,3}}
%\nonumber\left(\xi_{A;B}\epsilon^{AB}\right)_{,3}+2\cotx\cdot\xi_{A;B}\epsilon^{AB}
\end{gather}
and finally
\begin{gather}
\xi_{A;B}\epsilon^{AB}=D \, ,
\end{gather}
where $D$ is a dipole function.

%We therefore have some solutions.
In that way we obtained the following
\begin{gather}
\xi_{A;B}\epsilon^{AB}=D\\
\xi_{A}{^{;A}}=-2K\cot x+2J\frac{1}{\sin x}\\ \label{pomdivergencja}
\xi_{3}=a\sin x+K-J\cos x
\end{gather}
The space of solutions is ten-dimensional, which is the maximal possible
number in three dimensions.

We now use the equation (\ref{eq:trojdywergencja}) to get
\begin{eqnarray}
\xi_{k}{^{|k}} &= & \xi_{3,3}+\xi^{A}{_{;A}}+2\cot x\cdot\xi_{3}\nonumber \\
%=a\cos x+J\sin x+
%\nonumber-2K\cotx+2J\frac{1}{\sinx}+%\nonumber2\cotx\left(a\sinx+K-J\cosx\right)\nonumber
%
 &=& 3a\cos x+3J \sin x  \,.
\end{eqnarray}
This means, that $\alpha$ from the equation $\xi_{\alpha|\beta}+\xi_{\beta|\alpha}=\alpha g_{\alpha\beta}$
(Greek indices may be both spacial and temporal) is equal to
\begin{gather}
\alpha=\frac{2}{3}\xi_{k}{^{|k}}=2a\cos x+2J\sin x\,. \label{pomalpha}
\end{gather}

\subsection{One-forms in space-time}

We should now consider the dependence of $a,J,K,D$, which are functions that characterize $\xi$, on time.\\[1ex]
\noindent %
\fbox{\begin{minipage}[c]{13cm}
\begin{gather}
2\xi_{0,0}=-\alpha\label{eq:t pierwsze}\\
\xi_{0,3}+\xi_{3,0}=0\label{t  drugie}\\
\xi_{0,A}+\xi_{A,0}=0\label{t trzecie}\\
\nonumber
\end{gather}
\end{minipage}}
\\[1ex]

\noindent Let us apply the covariant derivative ${;A}$ to the equation
(\ref{t trzecie}). We obtain the following equation.
\begin{gather}
\xi_{0;A}{^{A}}=-\xi{^{A}}_{;A}{_{,0}}. \label{eq: xi 0 pierwsze}
\end{gather}
This equation proves that $\overset{0}{\triangle}\xi_{0}$
is a dipole function, because $\xi{^{A}}_{;A}$ is a dipole function.

For this reason
\begin{gather}
\xi_{0}=\suppp d\xi_{0}+\suppp m\xi_{0}\,.
\end{gather}
We will now differentiate the equation (\ref{eq: xi 0 pierwsze}) with
respect to time. We obtain
\begin{gather}
\xi_{0,0;A}{^{A}}=-\xi_{A;}{^{A}}{_{,00}}\,.
\end{gather}
We can use the last equation together with equation  \ref{eq:t pierwsze} and \ref{pomalpha} to obtain
\begin{gather}
\xi_{A;}{^{A}}{_{,00}}=\frac{1}{2}\alpha_{;A}{^{A}}=\frac{1}{2}\frac{1}{\sin^{2}x}\left(-2\right)2J\sin x%=\nonumber=\frac{-1}{\sin^{2}x}2J\sinx=-2J\frac{1}{\sinx}\,.
\end{gather}
We now use the equation \ref{pomdivergencja} to obtain %\marginpar{niejasne, mozet to jakos skrocic?}
\begin{gather}
-2K_{,00}\cot x+2J_{,00}\frac{1}{\sin x}=-2J\frac{1}{\sin x}\,.\label{eq:  pierwsze na BC}
\end{gather}
To obtain the second equation for the coefficients $B,C$ we have
to differentiate the equation (\ref{t  drugie}) with respect to time.
%$,0$.

\[
\xi_{3,00}=-\xi_{0,03}=\frac{1}{2}\alpha_{,3}=\frac{1}{2}\left(-2a\sin x+2J\cos x\right)
\]
so
\[
a_{,00}\sin x+K_{,00}-J_{,00}\cos x=-a\sin x+J\cos x\,.
\]
This equation splits into the dipole and monopole parts
and we get
%gives us the following equations
\begin{gather}
a_{,00}=-a \, , \label{eq: pierwsze liniowe}\\
K_{,00}-J_{,00}\cos x=J\cos x \, , \label{drugie liniowe}\\
J_{,00}-K_{,00}\cos x=-J\, . \label{trzecie liniowe}
\end{gather}
The last equation is equivalent to the equation (\ref{eq:  pierwsze na BC}).
Equation (\ref{eq: pierwsze liniowe}) has a solution
\begin{gather}
a=a_{0}\sin t+a_{1}\cos t \, .
\end{gather}
Let us multiply the equation (\ref{drugie liniowe}) by $\cos x$
and add the result to the equation (\ref{trzecie liniowe}). We get
\begin{gather}
J_{,00}\sin^{2}x=-J\sin^{2}x \, .
\end{gather}
We have therefore
\begin{gather}
J_{,00}=-J
\end{gather}
and
\begin{gather}
J=J_{0}\sin t+J_{1}\cos t \, .\label{eq: jot}
\end{gather}
Let us add the equation (\ref{drugie liniowe}) to the equation (\ref{trzecie liniowe})
multiplied by $\cos x$. We obtain
\begin{gather}
K_{,00}=0 \, , \\
K=G+Ht \, .
\end{gather}
%Dla kompletu
We can also covariantly differentiate the equation (\ref{t trzecie})
with respect to the index $;B$ and contract the result with $\epsilon^{AB}$. We obtain
\begin{gather}
\left(\xi_{A;B}\epsilon^{AB}\right)_{,0}=0\label{eq:rotacja po czasie}
\end{gather}
Equation (\ref{eq:rotacja po czasie}) has a solution
\begin{gather}
\xi_{A;B}\epsilon^{AB}=D \, , \label{eq: rot xi}
\end{gather}
where $D$ is a dipole function which is time independent. Additionally
\begin{eqnarray}
\xi_{3} &= & a\sin x+K-J\cos x \nonumber \\
 &= & \left(a_{0}\sin t+a_{1}\cos t\right)\sin x+G+Ht-\left(J_{0}\sin t+J_{1}\cos t\right)\cos x
\end{eqnarray}
and
\begin{gather}
\xi_{A}{^{;A}}=-2K\cot x+2J\frac{1}{\sin x} \, . %=
%\nonumber-2\cotx\left(G+Ht\right)+\frac{2}{\sinx}\left(J_{0}\sint+J_{1}\cost\right)\,.
\end{gather}
Now we only have to calculate the coefficient  $\xi_{0}$. Let's calculate
\begin{gather}
\alpha=2a\cos x+2J\sin x \, .
%=\nonumber2\left(a_{0}\sint+a_{1}\cost\right)\cosx+\left(2J_{0}\sint+2J_{1}\cost\right)\sinx\,.\label{alp}
\end{gather}
We can use the equation (\ref{eq:t pierwsze}) to get
\begin{gather}
\xi_{0,0}=-\frac{1}{2}\alpha=-\left(a_{0}\sin t+a_{1}\cos t\right)\cos x+\left(-J_{0}\sin t-J_{1}\cos t\right)\sin x \, .
\end{gather}
We have therefore
\begin{gather}
\xi_{0}=\left(a_{0}\cos t-a_{1}\sin t\right)\cos x+\left(J_{0}\cos t-J_{1}\sin t\right)\sin x+F\,,\label{eq: xi 0}
\end{gather}
where $F$ is a certain function with both monopole and dipole parts,
which are time-independent. We don't know yet how they depend on
$x$. If we now substitute our results to (\ref{t  drugie}) we will
find that $H=0$. More precisely,
\begin{gather}
\xi_{3,0}=\left(a_{0}\cos t-a_{1}\sin t\right)\sin x+H-\left(J_{0}\cos t-J_{1}\sin t\right)\cos x\\
\xi_{0,3}=-\left(a_{0}\cos t-a_{1}\sin t\right)\sin x+\left(J_{0}\cos t-J_{1}\sin t\right)\cos x+F_{,3}\,.
\end{gather}
We have therefore from the equation (\ref{t  drugie})
\begin{gather}
H+F_{,3}=0\,.\label{eq: pierwsze na M}
\end{gather}
Equation (\ref{eq: xi 0 pierwsze}) proves that
\begin{gather}
\xi_{0;A}{^{;A}}=\frac{-2}{\sin^{2}x}\left(J_{0}\cos t-J_{1}\sin t\right)\sin x+F_{;A}{^{A}} \, , \\
\xi_{A}{^{;A}}{_{,0}}=-2H\cot x+\frac{2}{\sin x}\left(J_{0}\cos t-J_{1}\sin t\right) \, ,
\end{gather}
so
\begin{gather}
0=F_{;A}{^{A}}-2H\cot x=\frac{-2}{\sin^{2}x}{\suppp dF} -2H\cot x\,.
\end{gather}
\begin{gather}
{\suppp dF}=-H\sin x\cos x \, . \label{M dipo}
\end{gather}
This combined with the equation (\ref{eq: pierwsze na M}) leads to
$H=0$. $F$ is independent from $x$, $t$ and angles. Let's denote
this constant quantity as $F=c$.

To sum up we have the following solutions \\[1ex] %
\fbox{\begin{minipage}[t]{13cm}%
\begin{gather}
\xi_{A}{^{;A}}=-2G\cot x+\frac{2}{\sin x}\left(J_{0}\sin t+J_{1}\cos t\right)\label{k div}
\end{gather}
\begin{gather}
\xi_{A;B}\epsilon^{AB}=D\label{k rot xi}
\end{gather}
\begin{gather}
\xi_{3}=\left(a_{0}\sin t+a_{1}\cos t\right)\sin x+G-\left(J_{0}\sin t
+J_{1}\cos t\right)\cos x \label{k xi3}
\end{gather}
\begin{gather}
\xi_{0}=\left(a_{0}\cos t-a_{1}\sin t\right)\cos x+\left(J_{0}\cos t-J_{1}\sin t\right)\sin x+c\label{k xi 0}
\end{gather}
\end{minipage}} \\[1ex]

Here $a_{0},a_{1},c$ are constants,   %w czasie i w $x$ funkcje monopolowe,
whereas $G,D,J_{0},J_{1}$ are dipole functions independent from
$x$ and $t$.\\
 The space of solutions has dimension 15, which is exactly the
number that was expected. %\end{minipage}}

We can now write the basis of the space of all the conformal Killing one-forms. According to
the equation \ref{eq:hodgea kodairy} we have
\begin{gather}
\xi_{A}=\overset{1}{v}_{;A}+\epsilon_{A}{^{B}}\overset{2}{v}_{;B}\,.
\end{gather}
Functions $\overset{1}{v}$ and $\overset{2}{v}$ may be calculated
using previously derived formulas
\[
\xi_{A}{^{;A}}=\overset{1}{v}_{;A}{^{A}}=-\frac{2}{\sin^2 x}\overset{1}{v}
\]
and
\begin{gather*}
\xi_{A;C}\epsilon^{AC}=\epsilon_{A}{^{B}}\overset{2}{v}_{;BC}\epsilon^{AC}
=\overset{2}{v}_{;A}{^{A}}=-\frac{2}{\sin^2 x}\overset{2}{v}\,.
\end{gather*}
We therefore have the following (linearly independent) conformal Killing one-forms for the metric $g$ from the equation \ref{konfrow}.
\[
R=-\frac{1}{2}\sin^2 x \, \epsilon{_{A}}{^{B}} D{_{,B}} \, \dd x{^{A}}
\]
\[
P=G\dd x+\sin x\cos x \, \dd G
\]
\[
T=c\dd t
\]
\[
B_{J_{0}}=-\sin x\sin t{\rm d}J{_{0}}-J{_{0}}\sin t\cos x\dd x+J{_{0}}\cos t\sin x\dd t
\]
\[
D=a_{1}\cos t\sin x\dd x-a_{1}\sin t\cos x\dd t
\]
\[
K{_{t}}=a_{0}\sin t\sin x\dd x+a_{0}\left(\cos t\cos x-1\right)\dd t
\]
\[
K_{J_{1}}=\left(\sin x\cos x-\sin x\cos t\right)\dd J{_{1}}-J_{1}\left(\cos t\cos x-1\right)\dd x-J_{1}\sin t\sin x\dd t
\]

From those conformal Killing one-forms for the metric $g$ from the equation \ref{konfrow} we can easily obtain conformal Killing one-forms for the 4 dimensional Anti-de Sitter metric from the equation \ref{prawdziwa} by multiplying them by the conformal factor $\frac 1 {\cos^2 x}$.

Close to $x=0$ our metric $g$ from the equation \ref{konfrow} is similar to Minkowski metric.
We are therefore expecting that for $x\rightarrow 0$ our Killing forms
will look similarly to the known conformal one-forms in the Minkowski
space-time. This turns out to be true. We see, that $R$ corresponds
to the generators of rotations in the Minkowski space-time, $P$ corresponds
to spacial translations, $T$ corresponds to time translation, $B{_{J{_{0}}}}$
to boosts, $D$ to dilation, $K{_{t}}$ to time acceleration, whereas
$K{_{J{_{1}}}}$ to space accelerations.

%\begin{comment}

%\end{comment}

\section{Additional proofs}

In this appendix we will present proofs for some of the theorems used in this paper.

Let's prove theorem \ref{pochodnajednoformyyy}. It states that if
$\omega$ is a one-form on the manifold $N$ then
\[
\omega_{b|a}=\omega_{b;a}-K_{ab}\omega_{\mu}n^{\mu}\,.
\]

\begin{proof}
	We can assume that
	\begin{equation}
	\tilde{K}\left(X,Y\right)=K\left(X,Y\right)n=K_{ac}X^{a}Y^{c}n\,.
	\end{equation}
	We have then
	\begin{equation}
	\overset{N}{\nabla}_{a}v^{b}=\overset{M}{\nabla}_{a}v^{b}+K_{ac}v^{c}n^{b}\,.\label{eq:pochweknab}
	\end{equation}
	%After writing covariant derivatives
	Using $|$ and $;$ (in the convention of the section \ref{submanifoldy} )we obtain
	\begin{align}
	v^{b}{_{|a}} & =v^{b}{_{;a}}+K_{ac}v^{c}n^{b} \, , \label{eq:poch kow wek}\\
	v^{n+1}{_{|a}} & =K_{ac}v^{c}n^{n+1}\,.
	\end{align}
	Here $v$ is tangent to $M$. We later have
	\begin{equation}
	 \left(v^{b}\omega_{b}\right)_{|a}=\left(v^{\mu}\omega_{\mu}\right)_{|a}
 =v^{\mu}{_{|a}}\omega_{\mu}+v^{\mu}\omega_{\mu|a}
  =v^{b}{_{;a}}\omega_{b}+K_{ac}v^{c}n^{\mu}\omega_{\mu}+v^{b}\omega_{b|a} \, .
	\end{equation}
	But on the other hand
	\begin{equation}
	 \left(v^{b}\omega_{b}\right)_{|a}=\left(v^{b}\omega_{b}\right)_{;a}
 =v^{b}{_{;a}}\omega_{b}+v^{b}\omega_{b;a}\, ,
	\end{equation}
	%We conclude the following equation
	hence we get the result
	\begin{equation}
	\omega_{b|a}=\omega_{b;a}-K_{ab}\omega_{\mu}n^{\mu}\,.\label{eq:poch kow kowektor}
	\end{equation}
\end{proof}

Now we will prove the theorem \ref{krzywizna zewnetrzna}. It states that the external curvature form $K_{ab}$
satisfies equation $K=-\frac{1}{2}\mathcal{L}_{n}g$, where $n$ is a normal normalized field.

\begin{proof}
	Let $X$ and $Y$ be vector fields tangent to $M$. Now
	\begin{align*}
	\left(\mathcal{L}_{n}g\right)\left(X,Y\right) & =\mathcal{L}_{n}\left(g(X,Y)\right)-g(\mathcal{L}_{n}X,Y)-g(X,\mathcal{L}_{n}Y)\\
	& =\overset{N}{\nabla}_{n}\left(g\left(X,Y\right)\right)-g(\left[n,X\right],Y)-g(\left[n,Y\right],X)\\
	& =g(\overset{N}{\nabla}_{n}X,Y)+g(\overset{N}{\nabla}_{n}Y,X)-g(\overset{N}{\nabla}_{n}X-\overset{N}{\nabla}_{X}n,Y)-g(\overset{N}{\nabla}_{n}Y-\overset{N}{\nabla}_{Y}n,X)\\
	& =-2K(X,Y) \, ,
	\end{align*}
	where in the last step we used the following equation
	\begin{equation}
	 0=\overset{N}{\nabla}_{X}\left(g\left(n,Y\right)\right)=g(\overset{N}{\nabla}_{X}n,Y)+g(n,\overset{N}{\nabla}_{n}Y)\,.
	\end{equation}
\end{proof}

Now we will present the proof of the theorem \ref{dualofcyktensor}. It states that
Hodge dual of the CYK three-form is a CYK tensor.

\begin{proof}
	Let's define $s=\sgn\left(\det g\right)$. In the following calculations
	we will be using standard notations for symmetrization and skew-symmetrization i.e. $\alpha_{\left(ab\right)}:=\frac{1}{2}\left(\alpha_{ab}+\alpha_{ba}\right)$
	and $\alpha_{\left[ab\right]}:=\frac{1}{2}\left(\alpha_{ab}-\alpha_{ba}\right)$.
	Hodge dual we define as $\ast T_{ef}=\frac{1}{3!}\epsilon_{ef}{^{abc}}T_{abc}$
	. That is why we contracted \ref{eq:cyk trojforma} with tensor $\frac{1}{6}\epsilon_{ef}{^{abc}}$,
	where $\epsilon$ is a metric volume form of $M$. Additionally we
	denote $\chi_{f}=\ast T_{f}{^{c}}{_{; c}}$. We end up with some
	identities.
	\[
	2T_{ab\left(c;d\right)}\frac{1}{6}\epsilon_{ef}{^{abc}}=\ast T_{ef;d}+\frac{1}{6}\epsilon_{ef}{^{abc}}T_{abd;c}
	\]
	Let us evaluate $\frac{1}{6}\epsilon_{ef}{^{abc}}T_{abd;c}$. To this
	end we remind ourselves that for the $k$-form on $n$ dimensional
	manifold we have $\ast\ast\alpha=\left(-1\right)^{k\left(n-k\right)+s}\alpha$,
	where $s:=\sgn\left(\det g\right)$.
	We have
	%Calculation goes like this.
	\begin{align*}
	%12\left(\frac{1}{6}\epsilon_{ef}{^{abc}}T_{abd;c}\right) & =
	2\epsilon_{ef}{^{abc}}T_{abd;c} & =2s\epsilon_{ef}{^{abc}}\ast\ast T_{abd;c}=2s\epsilon_{ef}{^{abc}}\frac{1}{2}\epsilon_{abd}{^{kh}}\ast\!T_{kh;c}\\
	& =s\epsilon_{ef}{^{abc}}\epsilon_{abd}{^{kh}}\ast\!T_{kh;c}=s\ast\!T_{kh;c}\epsilon_{abef\tilde{c}}\epsilon^{ab\tilde{d}kh}g^{\tilde{c}c}g_{\tilde{d}d}\\
	& =\ast T_{kh;c}g^{c\tilde{c}}g_{\tilde{d}d}2\delta^{\tilde{d}kh}{_{ef\tilde{c}}}\\
	& \hspace*{-1.5cm}=2\left(\ast T_{fh;c}g^{hc}g_{ed}+\ast T_{ke;c}g^{kc}g_{fd}+\ast T_{ef;d}-\ast T_{eh;c}g^{hc}g_{fd}-\ast T_{fe;d}-\ast T_{kf;c}g^{kc}g_{ed}\right)\\
	& =2\left(\chi_{f}g_{ed}-\chi_{e}g_{fd}+2\ast\!T_{ef;d}-\chi_{e}g_{fd}+\chi_{f}g_{ed}\right)\\
	& =8\chi_{[f}g_{e]d}+4\ast\!T_{ef;d} \, ,
	\end{align*}
	where
	\begin{equation}
	\delta^{a_{1}a_{2}a_{3}}{_{b_{1}b_{2}b_{3}}}=\sum_{\pi\in S\left(3\right)}^ {}\sgn\left(\pi\right)\text{}\prod_{i\in\{1,2,3\}}^ {}\delta^{a_{\pi\left(i\right)}}{_{b_{i}}} \, .
	\end{equation}
	So
	\begin{equation}
	2T_{ab\left(c;d\right)}\frac{1}{6}\epsilon_{ef}{^{abc}} =
	\frac{2}{3}\chi_{[f}g_{e]d}+\frac{4}{3}\ast\!T_{ef;d} \, .
	\end{equation}
	But from the equality \ref{eq:cyk trojforma} it follows that
	\begin{gather}
	2T_{ab\left(c;d\right)}\frac{1}{6}\epsilon_{ef}{^{abc}}
	=\frac{1}{6}\epsilon_{ef}{^{abc}}\left(-2Q_{[ab}g_{c]d}+Q_{[ac}g_{b]d}-Q_{[bc}g_{a]d}\right) \nonumber \\
	=\frac{1}{6}\epsilon_{ef}{^{abc}}\left(-4\right)Q_{ab}g_{cd}=-4\ast\!Q_{efd} \, ,
	\end{gather}
	so
	\begin{equation}
	\ast T_{ef;d}=-3\ast\!Q_{efd}+\frac{1}{2}\chi_{[e}g_{f]d} \, .
	\end{equation}
	We check that
	\begin{equation}
	 2\ast\!T_{e\left(f;d\right)}=\frac{1}{4}\left(\chi_{e}g_{fd}-\chi_{f}g_{ed}+\chi_{e}g_{df}-\chi_{d}g_{ef}\right)=\frac{1}{4}\left(2\chi_{e}g_{fd}-\chi_{f}g_{ed}-\chi_{d}g_{ef}\right)\,,
	\end{equation}
	which is an equation satisfied by CYK tensor.
\end{proof}
%\nocite{ccosstam}
%\nocite{costam}
%\nocite{5Dblackholes}
%\bibliographystyle{kp}
%\bibliography{mojabibliografia}

%

\end{document}